\renewcommand{\paragraph}{\roman{paragraph}}
\newtheorem{thm}{\bfseries  Theorem}[section]
\newtheorem{lem}[thm]{\bfseries   Lemma}
\newtheorem{Def}[thm]{\bfseries  Definition}
\newtheorem{ex}[thm]{\bfseries   Example}
\newtheorem{rem}[thm]{\bfseries   Remark}
\begin{document}
\title{\bf New Constructions of Mutually Orthogonal Complementary Sets and Z-Complementary Code Sets Based on Extended Boolean Functions}
\author{Hongyang Xiao\thanks{ Hongyang Xiao, College of Mathematics, Nanjing University of Aeronautics and Astronautics, Nanjing, Jiangsu, 211106, China, {\tt xhycxyf@163.com}},
Xiwang Cao\thanks{Corresponding author. Xiwang Cao, College of Mathematics, Nanjing University of Aeronautics and Astronautics, Nanjing, Jiangsu, 211106, China; Key Laboratory of Mathematical Modeling and High Performance Computing of Air Vehicles (NUAA), MIIT, Nanjing, 211106, China, {\tt xwcao@nuaa.edu.cn}}
}
\date{}
\maketitle

\begin{abstract}
Mutually orthogonal complementary sets (MOCSs) and  Z-complementary code sets (ZCCSs) have many applications in practical scenarios such as synthetic aperture imaging systems and multi-carrier code division multiple access (MC-CDMA) systems.  With the aid of extended Boolean functions (EBFs), in this paper, we first propose a direct construction of MOCSs with flexible lengths, and then propose a new construction of ZCCSs. The proposed MOCSs cover many existing lengths and have non-power-of-two lengths when $q=2$. Our presented second construction can generate optimal ZCCSs meeting the set size upper bound. Note that the proposed two constructions are direct without the aid of any special sequence, which is suitable for rapid hardware generation.

\end{abstract}

{\bf Keywords:}   Multi-carrier code division multiple access (MC-CDMA)$ \cdot $ mutually orthogonal complementary set (MOCS)$ \cdot $ Z-complementary code set (ZCCS)$ \cdot $ extended Boolean function (EBF).

{\bf Mathematics Subject Classification:}  11T71 $\cdot$ 94A60 $ \cdot $ 06E30

\section{Introduction}
\textcolor{black}{The concept of Golay complementary pair (GCP) was initiated by Golay in 1961\cite{MG}.} The aperiodic auto-correlation function (AACF) of a GCP diminishes to zero for all time shifts except at zero.  In 1972, Tseng and Liu generalized the concept of GCP to Golay complementary sets (GCSs) and MOCSs \cite{CC1}. \textcolor{black}{An $(N,L)$-GCS is a set of $N$ $(\geq 2)$ sequences of length $L$ with the property that their AACF is zero for any non-zero time shifts and an $(M,N,L)$-MOCS is a collection of $M$ GCSs, in which every GCS has  $N$ sequences of length $L$ such that any two distinct GCSs are orthogonal.} In 1988, Suehiro and Hatori proposed the concept of $(N,N,L)$-complete complementary codes (CCCs) whose set size achieves the theoretical upper bound of MOCSs (i.e., $M\leq N$)\cite{SN}. Due to the ideal correlation properties,
MOCSs have been applied in many practical scenarios such as synthetic aperture imaging systems\cite{TY}, OFDM-CDMA systems \cite{ZZ} and MC-CDMA systems \cite{TS,AJ,LZ}. 

In recent years, the construction of MOCSs has attracted extensive attention in sequence design community. \textcolor{black}{Generalized Boolean functions (GBFs), usually are utilized to construct MOCSs. This is initiated by the pioneer work of Davis and Jedwab in \cite{JAD} which proposed a direct construction of $2^h$-ary $(h>0)$ GCPs of length $2^m$ $(m>0)$. Paterson extended the idea of \cite{JAD} to construct $q$-ary (for even $q$) GCPs \cite{PK}. Further constructions of GCPs and GCSs based on GBFs have been proposed in \cite{YL2,CY2}. 
In \cite{AR}, Tathinakumar and Chaturvedi proposed a direct construction of $q$-ary CCCs of length $2^m$ by extending Paterson's idea in \cite{PK}. Wu \textit{et al.} \cite{SW} designed MOCSs with non-power-of-two lengths.  Later, a number of direct constructions of $q$-ary MOCSs with non-power-of-two lengths are presented in \cite{LT,KM}. Sarkar \textit{et al.}  in \cite{SP5} proposed $(p^{n+1},p^{n+1},p^m)$-CCCs via $q$-ary functions ($\mathbb{Z}_{p}^{m}\rightarrow \mathbb{Z}_{q}$), where  $p$  is a prime number and $q$ is a positive multiple of $p$. But these CCCs only have  prime power lengths.
In \cite{PS2}, Sarkar \textit{et al.}  designed  CCCs of length $p_1^{m_1}p_2^{m_2}\cdots p_k^{m_k}$ (where each $p_i$ is a prime and $m_i$ is a  positive integer)  using multivariable functions (MVFs)\cite{PS2}. This direct construction can generate $q$-ary CCCs of all possible lengths. However, in the case of $q=2,$ only binary CCC of length of form $2^m$ $(m\in \mathbb{Z})$ has been constructed \cite{PS2}. Apart from these direct constructions of MOCSs, there are some other indirect methods to construct MOCSs such as interleaving, concatenation, paraunitary (PU) matrices, Kronecker product, extended correlation, etc.  \cite{XC,YJ,ZG,KL,SD1,SD2}. However, the generated MOCSs  may not be friendly for hardware generation due to their large space and time requirements. So how to  construct  MOCSs with flexible lengths is still an open problem.}

 \textcolor{black}{Since the set size is constrained by the number of sub-carrier in MOCSs, which prevents the communication system from supporting a large number of users, Fan \emph{et al.} proposed the concept of ZCCSs in \cite{FP}. The reason why ZCCSs have large set sizes is that there is a zero correlation zone (ZCZ) in the aperiodic cross-correlation and auto-correlation. For an $( M, N,L, Z)$-ZCCS, it holds that $M\leq N\lfloor L/Z\rfloor$ and it is optimal if the upper bound is achieved, \textcolor{black}{where $M,$ $N,$ $L,$ $Z$ refer  to the set size, number of sub-carrier, length and ZCZ width, respectively.} Especially, an $( M, N,L, Z)$-ZCCS is called an MOCS if $Z = L$. 
In the literature, ZCCSs are constructed by using direct and indirect methods. In \cite{WS}, Wu \textit{et al.} proposed ZCCSs with length $2^m$ ($m>0$) based on GBFs, then they expanded the parameters of ZCCSs in 2021 \cite{SW2}.  Several  GBFs based constructions of ZCCSs are presented in the literature \cite{SP1,SP2,SP3,GG1}. Additionally, Tian \textit{et al.} constructed ZCCSs by using PU matrices in \cite{LT1}. Yu \textit{et al.} applied Kronecker product to obtain ZCCSs in \cite{TY1}. Das \textit{et al.} presented a class of ZCCSs by using Butson-type Hadamard (BH) matrices and optimal Z-paraunitary (ZPU) matrices \cite{DS}. Adhikary  and Majhi in \cite{AA} employed Hadamard product to construct ZCCSs with new parameters. Further constructions of ZCCSs  have been proposed in \cite{YL,GG,SP4,RA}. In most previous designs, however, the optimal ZCCSs based on  direct methods  have limited lengths and the optimal ZCCSs based on indirect methods have limited hardware generations. Recently, Shen \textit{et al.} introduced the concept of EBF  and obtained optimal ZCCSs of length $q^m$ \cite{SB}, where $q\geq 2$ is a positive integer. This work helps us to construct more optimal ZCCSs.}

\textcolor{black}{Motivated by the existing works on MOCSs and ZCCSs,  in this paper, we construct $(q^{d'},q^{v+d},\gamma )$-MOCSs with flexible lengths and   $ (q^{v+d},q^{d},q^{m},q^{m-v}) $-ZCCSs by using EBFs, where $\gamma=a_mq^{m-1}+\sum^{v-1}_{k=1}a_{k}q^{m-v+k-1}+q^{u}$, $0<d'<d<m $, $v<m$, $a_k\in\mathbb{Z}_{q}$, $a_m\in\mathbb{Z}^*_{q}$ and $q\geq 2$ is a  positive integer. According to the arbitrariness of $q$, the proposed MOCSs cover the result in \cite{PS2} and have non-power-of-two lengths when $q=2$. In addition, the resulting MOCSs and ZCCSs can be obtained directly from EBFs without using  tedious sequence operations. Note that the proposed ZCCSs are  optimal with respect to the theoretical upper bound.}

The remainder of this paper is outlined as follows. In Section 2, we give the notations and definitions that will be used throughout this paper. In Section 3, we show  a construction of MOCS with flexible lengths. In Section 4, we present  an construction optimal ZCCS. In Section 5, we  make a comparison of the existing literature with this paper. Finally, we conclude this paper in Section 6.
\section{Preliminaries}
\subsection{Notation}
\begin{itemize}
	\item $\mathbb{Z}_{q}=\{0,1,\cdots,q-1\}$ is the ring of integers modulo $q$, where $q\geq 2$ is a positive integer throughout this paper, unless we specifically point out;
	\item $\mathbb{Z}^{*}_{q}=\mathbb{Z}_{q}\backslash \{0\}$;
	\item $\mathbb{N}_{m}=\{1,2,\cdots,m\}$ is the set with $m$ elements;
	\item $\xi=e^{2\pi\sqrt{-1}/q}$ is a primitive $q$-th root of unity;
	\item $\lfloor x\rfloor$ denotes the largest integer lower than or equal to $x$;
	\item $\lceil x\rceil$ denotes the smallest integer bigger than or equal to $x$;
	\item Bold small letter $\mathbf{a}$ denotes a sequence of length $L$, i.e., $\mathbf{a}=(a_{0}, a_{1},\cdots,a_{L-1});$
	\item $(\cdot)^{*}$ denotes the conjugate of  $(\cdot).$
\end{itemize}

\subsection{Correlation functions and complementary sequence sets}

Assume $\mathbf{a}=(a_{0},a_{1},\cdots,a_{L-1})$ and $\mathbf{b}=(b_{0},b_{1},\cdots,b_{L-1})$ are $\mathbb{Z}_{q}$-valued sequences of length $L$, where $a_{i}$ and $b_{i}$ are in the ring $\mathbf{\mathbb{Z}}_{q}$. The aperiodic cross-correlation function (ACCF) $R_{\mathbf{a},\mathbf{b}}(\tau)$ between $\mathbf{a}$ and $\mathbf{b}$ at a time shift $\tau$ is defined as
$$R_{\mathbf{a},\mathbf{b}}(\tau)
=\left\{\begin{array}{ll}
	\sum_{i=0}^{L-1-\tau}\xi^{a_{i}-b_{i+\tau}}, &0\leq \tau \leq L-1, \\
	\sum_{i=0}^{L-1+\tau} \xi^{a_{i-\tau}-b_{i}}, &-L+1 \leq \tau<0.
\end{array}\right. $$
If $\mathbf{a}=\mathbf{b}$, then $R_{\mathbf{a},\mathbf{b}}(\tau)$ is called the aperiodic auto-correlation function (AACF), denoted as $R_{\mathbf{a}}(\tau)$. In addition, by the definition of AACF, we get $R_{\mathbf{b},\mathbf{a}}(-\tau)=R^{*}_{\mathbf{a},\mathbf{b}}(\tau).$
\begin{Def}
	A set of $N$ length-$L$ sequences $\{\mathbf{a}_{0},\mathbf{a}_{1},\cdots,\mathbf{a}_{N-1}\}$ is called a GCS of order $N$ if for all $0<|\tau|\leq L-1,$
	$$\sum^{N-1}_{i=0}R_{\mathbf{a}_{i}}(\tau)=0.$$
\end{Def}

\begin{Def}
	A set of $M$ sequence sets $\mathcal{S}=\{\mathcal{S}_{0},\mathcal{S}_{1},\cdots,\mathcal{S}_{M-1}\}$ is called an $(M,N,L)$-MOCS if for any  $0\leq i\neq j\leq M-1$ and $0\leq|\tau|\leq L-1,$
	$$R_{\mathcal{S}_{i}, \mathcal{S}_{j}}(\tau)=\sum^{N-1}_{k=0}R_{\mathbf{a}^{i}_{k},\mathbf{a}^{j}_{k}}(\tau)=0,$$
	where each $\mathcal{S}_{t}=\{\mathbf{a}^{t}_{0},\mathbf{a}^{t}_{1},\cdots,\mathbf{a}^{t}_{N-1}\}$ is a GCS of $N$ length-$L$ sequences.
\end{Def}

\begin{Def}
	A set of $M$ sequence sets $\mathcal{S}=\{\mathcal{S}_{0},\mathcal{S}_{1},\cdots,\mathcal{S}_{M-1}\}$ is called an $(M,N,L,Z)$-ZCCS if
	$$\begin{aligned}
		R_{\mathcal{S}_{i}, \mathcal{S}_{j}}(\tau)
		&=\sum_{k=0}^{N-1} R_{\mathbf{a}^i_{k},\mathbf{a}^{j}_{k}}(\tau)
		=\left\{\begin{array}{ll}NL, & \tau=0, \;i=j, \\
			0, & 0<|\tau|<Z, \; i=j, \\
			0, & |\tau|<Z, \;i\neq j,
		\end{array}\right.\end{aligned}$$
\end{Def}
\noindent where $Z$ denotes the ZCZ width and each $\mathcal{S}_{t}=\{\mathbf{a}^{t}_{0},\mathbf{a}^{t}_{1},\cdots,\mathbf{a}^{t}_{N-1}\}$ consists of $N$ length-$L$ sequences for any $0\leq t\leq M-1$. In addition, if $Z=L$, then the $(M,N,L,Z)$-ZCCS is called an $(M,N,L)$-MOCS.

\textcolor{black}{The following results give two bounds on the parameters of MOCSs and ZCCSs, respectively.}
\begin{lem}\textrm{\cite{SN}}
	For an $(M,N,L)$-MOCS, the upper bound on set size satisfies the inequality $M\leq N$. When $M=N$, it is called a CCC.
\end{lem}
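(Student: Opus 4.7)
The plan is to convert the combinatorial MOCS conditions into a linear-algebraic orthogonality statement in $\mathbb{C}^{N}$. For each sequence $\mathbf{a}^{i}_{k}$ in the MOCS, I would introduce the generating polynomial
\[
A^{i}_{k}(z) \;=\; \sum_{\ell=0}^{L-1}\xi^{a^{i}_{k,\ell}}\,z^{\ell},
\]
viewed as a Laurent polynomial. The first step is to check the standard identity relating correlations and polynomial products: for $|z|=1$,
\[
A^{i}_{k}(z)\,\overline{A^{j}_{k}(z)} \;=\; \sum_{\tau=-(L-1)}^{L-1} R_{\mathbf{a}^{i}_{k},\mathbf{a}^{j}_{k}}(\tau)\,z^{-\tau},
\]
which follows directly from the definition of $R_{\mathbf{a},\mathbf{b}}(\tau)$ after splitting into $\tau\geq 0$ and $\tau<0$.

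Next I would sum this identity over $k=0,\ldots,N-1$ and feed in the MOCS hypothesis. When $i\neq j$, every coefficient on the right-hand side vanishes, so
\[
\sum_{k=0}^{N-1} A^{i}_{k}(z)\,\overline{A^{j}_{k}(z)} \;=\; 0 \qquad (|z|=1).
\]
When $i=j$, the GCS property collapses the sum to the single coefficient at $\tau=0$, giving
\[
\sum_{k=0}^{N-1} \bigl|A^{i}_{k}(z)\bigr|^{2} \;=\; \sum_{k=0}^{N-1} R_{\mathbf{a}^{i}_{k}}(0) \;=\; NL \qquad (|z|=1).
\]

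Finally, I would fix any particular $z$ on the unit circle (say $z=1$) and define vectors
\[
\mathbf{u}_{i} \;=\; \bigl(A^{i}_{0}(z),\,A^{i}_{1}(z),\,\ldots,\,A^{i}_{N-1}(z)\bigr) \;\in\; \mathbb{C}^{N}, \qquad 0\leq i\leq M-1.
\]
The two identities above say exactly that $\langle \mathbf{u}_{i},\mathbf{u}_{j}\rangle = 0$ for $i\neq j$ and $\|\mathbf{u}_{i}\|^{2}=NL>0$. Hence $\{\mathbf{u}_{i}\}_{i=0}^{M-1}$ is an orthogonal set of nonzero vectors in an $N$-dimensional complex space, forcing $M\leq N$. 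The one delicate point is verifying the polynomial-product identity with the correct sign convention on $\tau$ so that the GCS/MOCS sums line up with the constant term and with zero respectively; everything else is routine linear algebra.
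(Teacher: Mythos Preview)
Your argument is correct and is the classical one: the generating-polynomial identity, summed over $k$, turns the MOCS conditions into Hermitian orthogonality of $M$ nonzero vectors in $\mathbb{C}^{N}$, whence $M\le N$. The only point to watch, as you note, is the sign on $\tau$; with the paper's convention $R_{\mathbf{a},\mathbf{b}}(\tau)=\sum_{i}\xi^{a_i-b_{i+\tau}}$ one indeed gets $A^{i}_{k}(z)\overline{A^{j}_{k}(z)}=\sum_{\tau}R_{\mathbf{a}^{i}_{k},\mathbf{a}^{j}_{k}}(\tau)\,z^{-\tau}$ on $|z|=1$, so the bookkeeping is fine.

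As for comparison: the paper does not supply its own proof of this lemma at all---it is stated with a citation to Suehiro and Hatori \cite{SN} and treated as a known bound. So there is nothing in the paper to compare your approach against; what you have written is essentially the standard proof found in the literature.
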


\begin{lem}\cite{FL} \label{15}
	For any $(M,N,L,Z)$-ZCCS, it holds that $$M\leq N\left \lfloor\frac{L}{Z}\right\rfloor.$$
Note that	a $ZCCS$ is $optimal$ if the above upper bound is achieved,i.e., $M= N\left \lfloor\frac{L}{Z}\right\rfloor.$
\end{lem}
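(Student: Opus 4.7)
The statement is the classical upper bound on the set size of a ZCCS, and it reduces to the MOCS bound $M\le N$ stated just above when $Z=L$. My plan is to prove it by producing a family of pairwise orthogonal nonzero vectors in a suitable complex Hilbert space and reading off the inequality from a dimension count.

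First, I would pass from the $\mathbb{Z}_q$-valued sequences to their unimodular complex lifts $\mathbf{s}^{i}_n=(\xi^{a^{i}_{n,k}})_{k=0}^{L-1}\in\mathbb{C}^{L}$, so that every ACCF $R_{\mathbf{a}^{i}_n,\mathbf{a}^{j}_n}(\tau)$ becomes an honest Euclidean inner product of shifted, zero-padded vectors. It is then convenient to encode each sequence as its generating polynomial $A^{i}_n(z)=\sum_{k}\xi^{a^{i}_{n,k}}z^{k}$ and to regard $\mathcal{S}_i$ as the $i$-th column of the $N\times M$ polynomial matrix $\mathbf{A}(z)$. In this language the ZCCS condition becomes the near-paraunitarity
\[
\bigl[\mathbf{A}^{*}(z^{-1})\mathbf{A}(z)\bigr]_{ij}\;\equiv\;NL\,\delta_{ij}\pmod{z^{Z}\text{ and }z^{-Z}}.
\]

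Next, to each pair $(i,t)$ with $0\le i<M$ and $0\le t<\lfloor L/Z\rfloor$, I would attach the shifted column $z^{tZ}\mathbf{A}^{i}(z)$ and view it as a vector in a common coefficient space. A direct unpacking shows $\langle z^{tZ}\mathbf{A}^{i},\,z^{t'Z}\mathbf{A}^{j}\rangle=\sum_n R_{\mathbf{a}^{i}_n,\mathbf{a}^{j}_n}\bigl((t-t')Z\bigr)$, and this vanishes whenever $t=t'$ and $i\ne j$ by the ZCCS cross-correlation condition. Once all $M\lfloor L/Z\rfloor$ such shifted vectors are shown to be pairwise orthogonal with common squared norm $NL$, a dimension count in the ambient coefficient space yields $M\le N\lfloor L/Z\rfloor$, with the floor arising because only $\lfloor L/Z\rfloor$ disjoint $Z$-shifts fit inside a length-$L$ window.

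The main obstacle is the orthogonality at $t\ne t'$: since $|(t-t')Z|\ge Z$ in this case, the ZCZ condition no longer applies directly to the inner product. Getting the sharp bound therefore requires a more careful embedding of the columns --- one natural choice is to interleave the $Z$ residue classes of each sequence into disjoint ambient blocks so that a multiple-of-$Z$ shift becomes a within-ZCZ shift on each block, and another is to push the argument through entirely in the paraunitary-matrix formulation, where the degree-truncation modulo $z^{Z}$ replaces the combinatorial bookkeeping with a clean rank-versus-degree statement. I would attempt the paraunitary route first, since it handles the boundary case $Z\nmid L$ uniformly via the floor function and turns the dimension count into a single statement about how many columns an $N$-row near-paraunitary polynomial matrix of effective length $L$ can support.
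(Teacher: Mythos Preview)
The paper does not prove this lemma; it is quoted from Feng--Fan--Zhou \cite{FL} without argument, so there is no in-paper proof to compare your proposal against.

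On the proposal itself, the gap you flag is genuine and neither suggested repair closes it. Polyphase interleaving modulo $Z$ does not transform the ZCZ hypothesis into what you need: the assumption constrains correlations at shifts $0<|\tau|<Z$, and after decimation by $Z$ these become relations \emph{between distinct} polyphase components at shifts $0$ and $\pm 1$, not the within-component orthogonality that would make a multiple-of-$Z$ shift look like a within-ZCZ shift. The paraunitary route is left as an intention; you would have to exhibit a concrete rank-versus-degree statement for an $N\times M$ polynomial matrix with $\mathbf{A}^{*}(z^{-1})\mathbf{A}(z)\equiv NL\,I_M \pmod{z^{\pm Z}}$ that outputs $M\le N\lfloor L/Z\rfloor$, and none is given. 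There is also a second, independent problem: even granting the orthogonality, your dimension count is off. Your $M\lfloor L/Z\rfloor$ shifted columns live in a coefficient space of dimension $N\cdot D$ with $D\ge L$, so at best you would get $M\lfloor L/Z\rfloor\le ND$, i.e.\ $M\le ND/\lfloor L/Z\rfloor$; for $Z\mid L$ and $D=L$ this reads $M\le NZ$, not $M\le NL/Z$. The orthogonality construction that actually stays inside the ZCZ uses the $Z$ \emph{unit} shifts $\tau=0,1,\dots,Z-1$ of each stacked code, yielding $MZ\le N(L+Z-1)$; but even this is strictly weaker than the floor bound for general $N$ and $L$, so the sharp inequality in \cite{FL} rests on a more refined argument than a bare orthogonality-plus-dimension count.
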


\subsection{Extended Boolean functions (EBFs)}

An EBF $f$ in $m$ variables $x_{1},x_{2},\cdots,x_{m}$ is a mapping from $\mathbb{Z}_{q}^{m}$ to $\mathbb{Z}_{q}$ where $x_{i}\in \mathbb{Z}_{q}$ for $i\in 1,2,\cdots,m$. Given $f(x)$, we define \textcolor{black}{$$\mathbf{f}=(f_{0},f_{1},\cdots,f_{q^{m}-1}),$$}where $f_{i}=f(i_{1},i_{2},\cdots,i_{m})$ and $(i_{1},i_{2},\cdots,i_{m})$ is the $q$-ary representation of the integer $i=\sum^{m}_{k=1}i_{k}q^{k-1}
$.  For example, for $f=x_{1}x_{2}+x_{1}+2$ with $m=2$ and $q=3$, we have the sequence \textcolor{black}{$\mathbf{f}=(2,0,1,2,1,0,2,2,2).$} In addition, we also consider the sequences of length $L\neq q^m.$  Hence we define the corresponding truncated sequence \textcolor{black}{$\mathbf{f}^{(L)}$} of the EBF $f$ by removing the last $q^m-L$ elements of the sequence \textcolor{black}{$\mathbf{f}$}. That is \textcolor{black}{$\mathbf{f}^{(L)}=(f_{0},f_{1},\cdots,f_{L-1})$} is a sequence of length $L$ with $f_{i}=f(i_{1},i_{2},\cdots,i_{m})$ for $i= 0,1,\cdots,L-1$, which is a naturally generalization of \cite{CC}. For convenience, we ignore the superscript of \textcolor{black}{$\mathbf{f}^{(L)}$} unless the sequence length is undetermined.

\section{Construction of MOCSs with flexible lengths}
In this section, we present a direct construction of MOCSs with flexible lengths. Before giving the new MOCSs, we introduce the following lemmas.
\begin{lem}\cite{CC3}\label{2}
	For an even integer $q$ and any positive integers $m$,  $ k$ with $k\leq m$, let $v$ be an integer with $0\leq v\leq m-k$, and $\pi$ be a permutation of $\mathbb{N}_m$ satisfying the following three conditions:
	
	$ (1) $ $\pi(m-k+1)<\pi(m-k+2)<\cdots<\pi(m-1)<\pi(m)=m.$
	
	$ (2) $ If $v>0$, then \textcolor{black}{$\mathbb{N}_v=\{\pi(1),\pi(2),\cdots,\pi(v)\}.$}
	
	$ 	(3) $ For all $\alpha=1,2,\cdots,k-1,$ if $\pi(t)<\pi(m-k+\alpha),$ then $\pi(t-1)<\pi(m-k+\alpha)$ where $2\leq t\leq m-k.$\\
Let
	$$f=\frac{q}{2}\sum^{m-k-1}_{s=1}x_{\pi(s)}x_{\pi(s+1)}+\sum^{k}_{\alpha=1}\sum^{m-k}_{s=1}c_{\alpha,s}x_{\pi(m-k+\alpha)}x_{\pi(s)}
	+\sum^{m}_{s=1}c_{s}x_{s}+c_{0},$$
	where $c_{\alpha,s}, c_{s}\in \mathbb{Z}_{q}$. Then the set
	$$\mathcal{F}=\left\{\mathbf{f}+\frac{q}{2}\sum^{k}_{\alpha=1}d_{\alpha}\mathbf{x}_{\pi(m-k+\alpha)}+\frac{q}{2}d_{k+1}\mathbf{x}_{\pi(1)}\mid d_{\alpha}\in \{0,1\}\right\}$$ forms a GCS of size $2^{k+1}$ and length $L=2^{m-1}+\sum^{k-1}_{\alpha=1}a_{\alpha}2^{\pi(m-k+\alpha)-1}+2^v$ with $a_{\alpha}\in \{0,1\}.$
\end{lem}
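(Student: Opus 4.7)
The plan is to prove that $\mathcal{F}$ is a GCS by verifying $\sum_{\mathbf{a}\in\mathcal{F}} R_{\mathbf{a}}(\tau)=0$ for every $\tau$ with $0<|\tau|\leq L-1$; by $R_{\mathbf{b},\mathbf{a}}(-\tau)=R^{*}_{\mathbf{a},\mathbf{b}}(\tau)$ it suffices to take $\tau>0$. I would parameterise the elements of $\mathcal{F}$ by $\mathbf{d}=(d_{1},\ldots,d_{k+1})\in\{0,1\}^{k+1}$ and, using $\xi^{q/2}=-1$, rewrite
$$\sum_{\mathbf{d}}R_{\mathbf{a}^{(\mathbf{d})}}(\tau)=\sum_{i=0}^{L-1-\tau}\xi^{f_{i}-f_{i+\tau}}\prod_{\alpha=1}^{k+1}\sum_{d_{\alpha}\in\{0,1\}}(-1)^{d_{\alpha}\Delta_{\alpha}(i,\tau)},$$
with $\Delta_{\alpha}(i,\tau)=(\mathbf{x}_{\pi(m-k+\alpha)})_{i}-(\mathbf{x}_{\pi(m-k+\alpha)})_{i+\tau}$ for $1\leq\alpha\leq k$ and $\Delta_{k+1}(i,\tau)=(\mathbf{x}_{\pi(1)})_{i}-(\mathbf{x}_{\pi(1)})_{i+\tau}$. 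Each inner $d_{\alpha}$-sum equals $2$ or $0$ according to the parity of $\Delta_{\alpha}$, so the whole expression reduces to $2^{k+1}\sum_{i\in I(\tau)}\xi^{f_{i}-f_{i+\tau}}$, where $I(\tau)$ consists of those $i\in\{0,\ldots,L-1-\tau\}$ for which every $\Delta_{\alpha}(i,\tau)$ is even.

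Next I would construct a Davis--Jedwab-style involution $\varphi:I(\tau)\to I(\tau)$ that sends $i$ to $i'$, the index obtained from $i$ by adding $q/2$ modulo $q$ to its $\pi(1)$-st $q$-ary digit. Because $x_{\pi(1)}$ appears in $f$ only through $\tfrac{q}{2}x_{\pi(1)}x_{\pi(2)}$, the linear term $c_{\pi(1)}x_{\pi(1)}$, and the cross terms $c_{\alpha,1}x_{\pi(m-k+\alpha)}x_{\pi(1)}$, a direct computation shows
$$(f_{i}-f_{i+\tau})-(f_{i'}-f_{i'+\tau})\equiv\tfrac{q}{2}\bigl((\mathbf{x}_{\pi(2)})_{i}-(\mathbf{x}_{\pi(2)})_{i+\tau}\bigr)\pmod{q},$$
the linear and cross-term contributions cancelling modulo $q$ thanks to the parity constraints cut out by $I(\tau)$. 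Splitting $I(\tau)$ according to the smallest index $t$ for which $(\mathbf{x}_{\pi(t)})_{i}\neq(\mathbf{x}_{\pi(t)})_{i+\tau}$ and iterating the toggling along the chain $\pi(1)\to\pi(2)\to\cdots\to\pi(m-k)$ groups the surviving indices into pairs whose contributions to $\sum_{i\in I(\tau)}\xi^{f_{i}-f_{i+\tau}}$ are negatives of each other, yielding $0$.

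The principal obstacle is showing that $\varphi$ indeed maps $I(\tau)$ into itself, i.e.\ that $0\leq i'\leq L-1-\tau$ whenever $0\leq i\leq L-1-\tau$; this is where the three hypotheses on $\pi$ are used in an essential way. Condition (2) puts $\pi(1),\ldots,\pi(v)$ among the lowest $v$ positions, so the summand $2^{v}$ in $L$ accommodates any change in the $\pi(1)$-st digit. Condition (1) forces the positions $\pi(m-k+\alpha)$ to form an increasing chain ending at $m$, fixing the high digits of $i$ and $i+\tau$. Condition (3) ensures that whenever $\pi(t)<\pi(m-k+\alpha)$ the same holds for all earlier $\pi(t-1),\pi(t-2),\ldots$, producing a nested pattern that matches the digit-by-digit form of $L=2^{m-1}+\sum_{\alpha=1}^{k-1}a_{\alpha}2^{\pi(m-k+\alpha)-1}+2^{v}$. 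I would execute this step by a case analysis on the pattern of digit-changes produced by the shift $\tau$ at the positions $\pi(m-k+\alpha)$, checking in each case that both $i'\leq L-1-\tau$ and $i'+\tau\leq L-1$. Once this bookkeeping is completed, the cancellation argument above finishes the proof; I expect the truncation check to be the only non-mechanical part, as the rest of the argument is a direct adaptation of Paterson's proof for the untruncated case.
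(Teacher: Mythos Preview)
The paper does not prove this lemma at all: it is quoted verbatim from \cite{CC3} and used only as motivation/background for Theorem~\ref{thm1}. So there is no ``paper's own proof'' to compare your proposal against.

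As for your outline itself, the global strategy (summing over $\mathbf d$ to restrict to indices where the distinguished coordinates of $i$ and $j=i+\tau$ agree, then pairing indices by flipping the digit at position $\pi(\beta_1-1)$ where $\beta_1$ is the least $t$ with $i_{\pi(t)}\neq j_{\pi(t)}$) is exactly the Davis--Jedwab/Paterson argument that \cite{CC3} uses, and your identification of the truncation check $i',j'\le L-1$ as the one nontrivial step---handled via conditions (1)--(3)---is accurate. One point needs correcting, however: in this lemma the variables $x_s$ are \emph{binary}, not $q$-ary. The sequence length $L=2^{m-1}+\sum_{\alpha}a_\alpha 2^{\pi(m-k+\alpha)-1}+2^{v}$ with $a_\alpha\in\{0,1\}$ forces the index $i$ to be expanded in base~$2$; this is the classical GBF setting ($\{0,1\}^m\to\mathbb Z_q$ with $q$ even), not the EBF setting ($\mathbb Z_q^m\to\mathbb Z_q$) used elsewhere in the paper. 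Accordingly the involution is the ordinary bit-flip $i_{\pi(\beta_1-1)}\mapsto 1-i_{\pi(\beta_1-1)}$, not ``adding $q/2$ modulo $q$ to a $q$-ary digit'', and your parity condition ``$\Delta_\alpha$ even'' should simply read $\Delta_\alpha=0$. With that adjustment your sketch matches the argument in \cite{CC3}.
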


\begin{lem}\label{5}
	For positive integers $m\geq 2$ and $r<m,$ let $h$ be a bijection  from \textcolor{black}{$S_1=\mathbb{N}_r$} onto \textcolor{black}{$S_2\subseteq \mathbb{N}_m$} with $r$ elements. Suppose that $h(u)$  is the smallest element of $S_2$.
  Let $i$ be an integer with
	
	$$\sum^{r}_{\substack {l=1\\ l\neq u}}a_{l}q^{h(l)-1}\leq i\leq \sum^{r}_{\substack {l=1\\ l\neq u}}a_{l}q^{h(l)-1}+q^{h(u)}-1,$$ where $a_{l}\in \mathbb{Z}^{*}_q$ for $l\in S_1\setminus\{u\}$ and $(i_{1},i_{2},\cdots,i_{m})$ is the $q$-ary representation of $i.$ Also let $i^{(t)}$ be an integer with $q$-ary representation $(i_{1},i_{2},\cdots,i_{k}\oplus t,\cdots,i_{m})$ for positive integers $k\leq h(u)$ and $t\in \mathbb{Z}^*_{q}$. Then we have  $$\sum^{r}_{\substack {l=1\\l\neq u}}a_{l}q^{h(l)-1}\leq i^{(t)}\leq \sum^{r}_{\substack {l=1\\l\neq u}}a_{l}q^{h(l)-1}+q^{h(u)}-1.$$
\end{lem}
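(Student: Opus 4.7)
My plan is to reinterpret the hypothesis that $i$ lies in the interval $[A,\,A+q^{h(u)}-1]$, where $A=\sum_{l\neq u} a_l q^{h(l)-1}$, as a condition on the individual $q$-ary digits of $i$, and then to observe that the substitution $i\mapsto i^{(t)}$ only alters digits that this condition leaves completely free. The entire argument rests on the structural hypothesis that $h(u)=\min S_2$, which guarantees $h(l)>h(u)$ for every $l\in S_1\setminus\{u\}$.

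First I would expand $q^{h(u)}-1=\sum_{j=1}^{h(u)}(q-1)q^{j-1}$, so that the $q$-ary expansions of both endpoints of the interval are explicit: the endpoint $A$ has digit $a_l$ at position $h(l)$ for each $l\neq u$ and $0$ elsewhere, while $A+q^{h(u)}-1$ has digit $a_l$ at position $h(l)$ for $l\neq u$, digit $q-1$ at each position in $\{1,2,\dots,h(u)\}$, and $0$ elsewhere. Because the positions $\{1,2,\dots,h(u)\}$ are disjoint from $\{h(l):l\neq u\}$ (the key consequence of $h(u)=\min S_2$), no carries occur when forming $A+q^{h(u)}-1$. A routine digit comparison then yields the characterization: an integer $i$ lies in $[A,\,A+q^{h(u)}-1]$ if and only if its $q$-ary representation $(i_1,\dots,i_m)$ satisfies $i_{h(l)}=a_l$ for every $l\neq u$, $i_j=0$ for each $j\in\mathbb{N}_m$ with $j>h(u)$ and $j\notin\{h(l):l\neq u\}$, and $i_1,\dots,i_{h(u)}\in\mathbb{Z}_q$ are arbitrary.

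Since $k\leq h(u)$, position $k$ lies in the free range, so replacing $i_k$ by $i_k\oplus t\in\mathbb{Z}_q$ leaves every prescribed digit intact and produces an integer $i^{(t)}$ that still satisfies the same digit characterization. Therefore $i^{(t)}$ lies in the same interval, which is exactly the claimed inequality. The only delicate point I anticipate is the bookkeeping in the digit characterization, specifically confirming that the two groups of constrained positions do not overlap and that $i_k\oplus t$ is again a legitimate $q$-ary digit (i.e., interpreting $\oplus$ as addition modulo $q$); once this is in place, the conclusion is immediate.
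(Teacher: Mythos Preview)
Your proposal is correct and follows essentially the same idea as the paper's proof. The paper phrases it via the substitution $j=i-A$ (so that $0\le j\le q^{h(u)}-1$ forces $j_s=0$ for $s>h(u)$, and then $j^{(t)}$ differs from $j$ only at position $k\le h(u)$), whereas you phrase it as an explicit digit characterization of the interval $[A,A+q^{h(u)}-1]$; both arguments hinge on the same observation that $h(u)=\min S_2$ keeps the positions $\{1,\dots,h(u)\}$ disjoint from $\{h(l):l\neq u\}$ so that no carries or borrows occur.
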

\begin{proof}
	For convenience, we let $j=i-\sum\limits^{r}_{\substack {l=1\\ l\neq u}}a_{l}q^{h(l)-1}$ and $(j_{1},j_{2},\cdots,j_{m})$ be the $q$-ary representation of $j.$ Then $0\leq j\leq q^{h(u)}-1$, which means $j_{s}=0$ for $s\geq h(u)+1$. Similarly, we let $j^{(t)}=i^{(t)}-\sum\limits^{r}_{\substack {l=1\\l\neq u}}a_{l}q^{h(l)-1}$ with $q$-ary representation $(j_{1},j_{2},\cdots,j_{k}\oplus t,\cdots,j_{m})$. Obviously, the $q$-ary representation of $j$ differs from that of $j^{(t)}$ in only one position $k$. So we obtain $j^{(t)}_{s}=j_{s}=0$ for $s\geq h(u)+1$ which implies $0\leq j^{(t)}\leq q^{h(u)}-1$. Therefore, $$\sum^{r}_{\substack {l=1\\ l\neq u}}a_{l}q^{h(l)-1}\leq i^{(t)}\leq \sum^{r}_{\substack {l=1\\ l\neq u}}a_{l}q^{h(l)-1}+q^{h(u)}-1.$$
	
\end{proof}

\begin{lem} \label{lem2}
	For positive integers $m\geq 2$ and $r<m,$ let $i$ and $h$ be the same as that of Lemma \ref{5}. If $i\leq \sum\limits^{r}_{\substack {l=1\\ l\neq u}}a_{l}q^{h(l)-1}+q^{h(u)}-1$ and $i_{h(l)}=a_{l}$ for $l\in S_1\setminus\{u\}$. Then we have $i_{s}=0$ for $s=h(u)+1, h(u)+2,\cdots,m-1$ and $s\neq h(l) $ for $l\in S_1\setminus\{u\}$.
\end{lem}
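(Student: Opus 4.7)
The plan is to mimic the change of variable used in Lemma~\ref{5} and then exploit the uniqueness of the $q$-ary expansion of $i$. First I would set $j = i - \sum_{l\neq u} a_{l}q^{h(l)-1}$; the upper-bound hypothesis then reads $0\le j\le q^{h(u)}-1$. Using the assumed digit equalities $i_{h(l)}=a_{l}$ for every $l\in S_{1}\setminus\{u\}$ together with the unique $q$-ary expansion $i=\sum_{s=1}^{m} i_{s}q^{s-1}$, the contribution of the digits at the positions $\{h(l):l\neq u\}$ exactly cancels $\sum_{l\neq u} a_{l}q^{h(l)-1}$, leaving
\[
j \;=\; \sum_{s\in T} i_{s}\,q^{s-1},\qquad T:=\mathbb{N}_{m}\setminus\{h(l):l\in S_{1}\setminus\{u\}\}.
\]

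Next I would use that $h(u)$ is the smallest element of $S_{2}$, so every $h(l)$ with $l\neq u$ satisfies $h(l)\ge h(u)+1$; in particular $h(u)\in T$. Since each digit $i_{s}$ is nonnegative, for any fixed $s_{0}\in T$ with $s_{0}\ge h(u)+1$ one obtains
\[
i_{s_{0}}\,q^{s_{0}-1} \;\le\; \sum_{s\in T} i_{s}\,q^{s-1} \;=\; j \;\le\; q^{h(u)}-1 \;<\; q^{s_{0}-1},
\]
which forces $i_{s_{0}}=0$. Restricting $s_{0}$ to $\{h(u)+1,h(u)+2,\dots,m-1\}$ with $s_{0}\ne h(l)$ for every $l\in S_{1}\setminus\{u\}$ then yields the stated conclusion.

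The only real subtlety is the bookkeeping in the first step: to justify the decomposition of $j$ one must check that writing $i=\sum_{l\neq u} a_{l}q^{h(l)-1}+j$ produces no carries between the two summands, which is immediate from $j<q^{h(u)}\le q^{h(l)-1}$ for every $l\neq u$, so the digit at position $h(l)$ really is $a_{l}$ and not disturbed by contributions from the lower block. Once that is in place, the whole argument reduces to the one-line magnitude comparison displayed above.
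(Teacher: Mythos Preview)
Your argument is correct and is essentially the paper's proof: both reduce to the observation that a nonzero digit $i_{s_0}$ at a position $s_0>h(u)$ with $s_0\neq h(l)$ would contribute at least $q^{h(u)}$ on top of the fixed contributions $\sum_{l\neq u}a_lq^{h(l)-1}$, violating the assumed upper bound; the paper simply phrases this as a one-line contradiction on $i$, whereas you first subtract off the $a_l$-terms to form $j$. Your closing worry about carries is unnecessary, since $i_{h(l)}=a_l$ is given as a hypothesis and the identity $j=\sum_{s\in T} i_s q^{s-1}$ then follows by pure subtraction in $\mathbb{Z}$, with no carry analysis required.
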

\begin{proof}
	Suppose the conclusion doesn't hold, we assume $i_{t}=b\neq 0$ where $h(u)+1\leq t\leq m-1$ and $t\neq h(l)$ for $l\in S_1\setminus\{u\}$. Then we have $i\geq \sum\limits^{r}_{\substack {l=1\\ l\neq u}}a_{l}q^{h(l)-1}+b
	q^{t-1}\geq \sum\limits^{r}_{\substack {l=1\\ l\neq u}}a_{l}q^{h(l)-1}+q^{h(u)}$ which contradicts the condition.
\end{proof}

\begin{lem}\label{lem1}\cite{CY}
	\textcolor{black}{Let $q$ be an even number, $(i_1,i_2,\cdots,i_m)$  and $(j_1,j_2,\cdots,j_m)$  be the binary representations of $i$ and $j$, respectively, and let $\{I_1, I_2,\cdots,I_d\}$ be a partition of the set $\mathbb{N}_m$. Let $ \pi_{\alpha} $ be a bijection from $\mathbb{N}_{m_{\alpha}}$ to $ I_{\alpha} $, where $ \lvert I_{\alpha}\rvert =m_{\alpha} $ for any $ \alpha\in \mathbb{N}_d. $} If the following three  conditions are satisfied:
	
	$ (1) $ $ \alpha_{1} $ is the largest integer satisfying
	$ i_{\pi_{\alpha}(\beta)}= j_{\pi_{\alpha}(\beta)} $ for
	$ \alpha\in \mathbb{N}_{\alpha_1}$ and $ \beta\in \mathbb{N}_{m_{\alpha}}. $
	
	$ (2) $ $ \beta_1 $ is the smallest integer such that $ i_{\pi_{\alpha_{1}(\beta_{1})}}\neq j_{\pi_{\alpha_{1}(\beta_{1})}}. $
	
	$ (3) $    Let $ i' $ and $ j' $ be integers which differ from $ i $ and $ j $, respectively, in only one position $ \pi_{\alpha_{1}(\beta_{1}-1)}, $ that is, $ i'_{\pi_{\alpha_{1}(\beta_{1}-1)}}=1-i_{\pi_{\alpha_{1}(\beta_{1}-1)}} $ and $ j'_{\pi_{\alpha_{1}(\beta_{1}-1)}}=1-j_{\pi_{\alpha_{1}(\beta_{1}-1)}} $.\\
	Then $$ f_{n,i}-f_{n,j}-f_{n,i'}+f_{n,j'}\equiv \frac{q}{2}\pmod q,$$
	\textcolor{black}{where $f(x)$ as shown in Eq. (1) of \cite{CY}.}
\end{lem}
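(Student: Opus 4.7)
My plan is to follow the classical ``pair-and-flip'' argument of Davis--Jedwab, adapted to the partition-permutation setting here. Reading the typographically awkward subscripts in conditions (2)--(3) as $\pi_{\alpha_1+1}(\beta_1)$ and $\pi_{\alpha_1+1}(\beta_1-1)$ (so that (2) becomes consistent with (1)), set $p=\pi_{\alpha_1+1}(\beta_1-1)$. Since $i,i'$ differ only in bit $p$ and likewise for $j,j'$, every monomial of $f$ that does not contain $x_p$ contributes $0$ to $f_{n,i}-f_{n,j}-f_{n,i'}+f_{n,j'}$; one therefore has to track only the constant, the single linear term $c_p x_p$, and those quadratic monomials that contain $x_p$.

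Next I would invoke the shape of $f$ in Eq.~(1) of \cite{CY}: a $(q/2)$-multiple of the intra-block chain $\sum_\alpha\sum_\beta x_{\pi_\alpha(\beta)}x_{\pi_\alpha(\beta+1)}$, plus an arbitrary $\mathbb{Z}_q$-linear form and a constant. The constant is killed by the quadruple difference. The linear term $c_p x_p$ contributes $c_p(i_p-i'_p)$ and $c_p(j_p-j'_p)$ with opposite signs to the quadruple difference, and these cancel because $i_p=j_p$ (since $p$ sits strictly before the first disagreement in block $\alpha_1+1$, by the minimality of $\beta_1$). The two quadratic monomials containing $x_p$ are $\frac{q}{2}x_{\pi_{\alpha_1+1}(\beta_1-2)}x_p$ (present only if $\beta_1\geq 2$) and $\frac{q}{2}x_p x_b$ with $b=\pi_{\alpha_1+1}(\beta_1)$; the first cancels because $i,j$ agree at $\pi_{\alpha_1+1}(\beta_1-2)$ (again by minimality of $\beta_1$), while the second yields $\frac{q}{2}(i_p-i'_p)(i_b-j_b)=\pm q/2\equiv q/2\pmod q$, which is precisely the claim.

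The main obstacle I anticipate is bookkeeping around the exact shape of Eq.~(1) in \cite{CY}. If it carries additional quadratic terms beyond the $q/2$-chain (as in Lemma~\ref{2}, whose $f$ has extra $c_{\alpha,s}$ products), one must verify for each such term containing $x_p$ that its other coordinate lies in an earlier block or at an earlier index of block $\alpha_1+1$, forcing $i,j$ to agree there (via hypothesis (1) or the minimality of $\beta_1$) so that the term cancels in the quadruple difference. The degenerate boundary cases $\beta_1=1$ (``previous'' monomial absent) and $\alpha_1=0$ (no fully agreeing blocks) will need a short separate treatment, but both reduce to trivially shorter versions of the same computation.
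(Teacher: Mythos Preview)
The paper does not supply a proof of this lemma; it is quoted directly from \cite{CY}. Your Davis--Jedwab pair-and-flip computation is the standard argument and is correct.

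Two minor remarks. First, you resolve the garbled subscripts by reading the active block as $\alpha_1+1$, keeping condition~(1) literal. The paper's own later use of the lemma (Case~2 in the proofs of Theorems~\ref{thm1} and~\ref{8}) instead writes $\pi_{\alpha_1}(\beta_1)$ and $a_{\alpha_1,\beta_1-1}$, so in the authors' convention $\alpha_1$ \emph{is} already the first block containing a disagreement, and it is condition~(1), not (2)--(3), that is misstated; this is purely an indexing choice and does not affect your calculation. Second, your worry about extra quadratic cross-terms is unnecessary here: the $f$ of \cite{CY} carries only the $(q/2)$-weighted intra-block chain plus a $\mathbb{Z}_q$-linear part and a constant, so the two chain monomials touching $x_p$ that you analyzed are the only quadratic contributions. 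Likewise, the ``boundary case $\beta_1=1$'' is not part of the lemma (then $\pi_{\alpha_1}(\beta_1-1)$ and hence $i',j'$ are undefined); at the application level that case is absorbed by the sum over $n$, exactly as in Case~1 of the cited proofs.
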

\textcolor{black}{\begin{lem}\label{lem3}
		Let $ \textbf{x}_{n_1}, \textbf{x}_{n_2},\cdots,\textbf{x}_{n_d}$ be the sequences corresponding to EBFs $ x_{n_1}, x_{n_2},\cdots,x_{n_d}$, respectively, where $ n_1<n_2<\cdots<n_d. $ Let   $ \textbf{u}=(u_0,u_1,\cdots,u_{L-1})=a_1\textbf{x}_{n_1}\oplus a_2\textbf{x}_{n_2}\oplus \cdots \oplus a_d\textbf{x}_{n_d} $ be a $q$-ary sequence with $ a_i\in \mathbb{Z}_q $ for any $ i\in \mathbb{N}_d, $ which is a   linear combination of $ \textbf{x}_{n_1}, \textbf{x}_{n_2},\cdots,\textbf{x}_{n_d}.$  If $ q^{n_1}\mid L $ and $a_1\neq 0$, let $(i_1,i_2,\cdots,i_m)$ be the binary representation of $i$, and let   $i^{(t)} $ differ from $i$ in only one position $n_1$, i.e.,   $$(i^{(t)}_1,i^{(t)}_2,\cdots,i^{(t)}_{m})=(i_1,i_2,\cdots,i_{n_1-1},i_{n_1}\oplus t,i_{n_1+1},\cdots,i_{m}),$$ where $t\in \mathbb{Z}^*_{q}$. Then $$\xi^{u_i}+\xi^{u_{i^{(1)}}}+\xi^{u_{i^{(2)}}}+\cdots+\xi^{u_{i^{(q-1)}}}=0.$$
	\end{lem}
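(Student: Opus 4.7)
The lemma is a one-step orthogonality statement, and the plan is to reduce it to the elementary identity $\sum_{t=0}^{q-1}\xi^{at}=0$ for any $a\in\mathbb{Z}_q^{*}$. The key observation is that, since $\mathbf{u}$ is the sequence associated with the EBF $f=a_1 x_{n_1}+a_2 x_{n_2}+\cdots+a_d x_{n_d}$, the entry at position $i$ (with $q$-ary representation $(i_1,\ldots,i_m)$) is $u_i \equiv \sum_{k=1}^{d}a_k i_{n_k}\pmod q$. Because $i^{(t)}$ differs from $i$ only in the $n_1$-th digit, the summands with $k\geq 2$ stay the same while the $k=1$ term changes by exactly $a_1 t$ modulo $q$. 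Hence $u_{i^{(t)}} \equiv u_i + a_1 t \pmod q$.

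Before summing, I would verify that each $i^{(t)}$ is still a legitimate index of $\mathbf{u}$, i.e.\ $0 \leq i^{(t)} \leq L-1$. Writing $L = q^{n_1} K$ using the hypothesis $q^{n_1}\mid L$, and decomposing
$$i \;=\; q^{n_1}\left\lfloor i/q^{n_1}\right\rfloor + \sum_{s=1}^{n_1} i_s q^{s-1},$$
modifying the $n_1$-th digit affects only the second summand, which remains in $[0, q^{n_1}-1]$, while the first part satisfies $\lfloor i/q^{n_1}\rfloor \leq K-1$ because $i\leq L-1$. Therefore $i^{(t)} \leq q^{n_1}(K-1)+(q^{n_1}-1)=L-1$, as needed.

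Combining the two observations gives
$$\sum_{t=0}^{q-1}\xi^{u_{i^{(t)}}} \;=\; \xi^{u_i}\sum_{t=0}^{q-1}\xi^{a_1 t} \;=\; 0,$$
where the last equality uses $a_1\not\equiv 0\pmod q$, so $\xi^{a_1}\neq 1$ and the finite geometric series vanishes. I anticipate no substantive obstacle; the only delicate point is the index-range check in the middle step, which is precisely what the divisibility hypothesis $q^{n_1}\mid L$ is designed to guarantee.
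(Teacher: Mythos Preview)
Your argument is correct and essentially identical to the paper's: both compute $u_{i^{(t)}}-u_i\equiv a_1 t\pmod q$ and then invoke the vanishing geometric sum $\sum_{t=0}^{q-1}\xi^{a_1 t}=0$ for $a_1\neq 0$. You are actually slightly more careful than the paper in that you explicitly verify $0\le i^{(t)}\le L-1$ using the hypothesis $q^{n_1}\mid L$, whereas the paper invokes this hypothesis without spelling out the index-range check.
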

\begin{proof}
	Since $ q^{n_1}\mid L $, then for any integer $i$,  $$u_{i^{(t)}}-u_i=(a_1i^{(t)}_{n_1}\oplus a_2i^{(t)}_{n_2}\oplus\cdots a_ki^{(t)}_{n_k} )-(a_1i_{n_1}\oplus a_2i_{n_2}\oplus\cdots a_ki_{n_k} )\equiv a_1(i^{(t)}_{n_1}-i_{n_1})\equiv a_1t\pmod q,$$
Thus we have $$1+\xi^{u_{i^{(1)}}-u_i}+\xi^{u_{i^{(2)}}-u_i}+\cdots+\xi^{u_{i^{(q-1)}}-u_i}=0.$$
\end{proof}}
Now we state our construction in the following Theorem \ref{thm1}, which is based on Lemma \ref{2}.
\begin{thm}\label{thm1}
	Let $m$, $ d $, \textcolor{black}{$d'$}, $ v $ be positive integers with $ 0<d'< d<m $ and $ v<m $. Let \{$I_1$,  $ I_2$,  $\cdots$,  $I_d$\}  be a partition of the set $\mathbb{N}_{m-v}$. Put $ \pi_{\alpha} $ be a bijection from $ \mathbb{N}_{m_{\alpha}}$ to $ I_{\alpha} $, where $ \lvert I_{\alpha}\rvert =m_{\alpha} $ for any $ \alpha\in \mathbb{N}_d. $ Let $ u $ be an integer with \textcolor{black}{$ \sum_{{\alpha}=1}^{d'}m_{\alpha}< u< \sum_{\alpha=1}^{d'+1}m_{\alpha}$},  we  impose an additional condition below:
	\textcolor{black}{$$\{\pi_{1}(1),\pi_{1}(2),\cdots,\pi_{1}(m_1),\pi_{2}(1),\cdots,\pi_{d'+1}(u')\}=\{1,2.\cdots,u\},$$ where $0<u'< m_{d'+1}$.}
	Let $(n_{1},n_{2},\cdots,n_{d+v})$ and $(p_{1},p_{2},\cdots,p_{d'})$ be the $q$-ary representations of $n$ and $p$, respectively. Let
	\begin{align}
		f(x)&=\sum^{d}_{\alpha=1}\sum^{m_{\alpha}-1}_{\beta=1}a_{\alpha,\beta}x_{\pi_{\alpha}(\beta)}x_{\pi_{\alpha}(\beta+1)}+\sum^{d}_{\alpha=1}\sum^{m_{\alpha}}_{\beta=1}\sum_{k=1}^{v}b_{\alpha,\beta,k}x_{\pi_{\alpha}(\beta)}x_{m-v+k}+\sum_{l=1}^{q-1}\sum_{s=1}^{m}c_{s,l}x^l_s+c_0,\\
		f^p_{n}(x) &= f(x)
		+\sum^{d}_{\alpha=1}n_{\alpha} x_{\pi_{\alpha}(1)}+\sum^{v}_{k=1}n_{k+d}x_{m-v+k}
		+c\sum^{d'}_{\alpha=1}p_{\alpha}x_{\pi_{\alpha}(m_{\alpha})},
	\end{align}
	where $a_{\alpha,\beta}, c\in \mathbb{Z}^{*}_{q}$ are co-prime with $q$
	and $b_{\alpha,\beta,k},c_{s,l},c_0\in \mathbb{Z}_{q}$. \textcolor{black}{Then $\{\mathcal{F}^{0},\mathcal{F}^{1},\cdots,\mathcal{F}^{q^{d'}-1}\}$ generates a $(q^{d'},q^{v+d},L)$-MOCS}
	with $L=a_mq^{m-1}+\sum^{v-1}_{k=1}a_{k}q^{m-v+k-1}+q^{u}$, $ a_{k}\in \mathbb{Z}_q$ and  $a_m\in \mathbb{Z}^*_q$, where \textcolor{black}{$\mathcal{F}^p=\{\mathbf{f}^p_{0},\mathbf{f}^p_{1},\cdots,\mathbf{f}^p_{q^{v+d}-1}\}.$}
\end{thm}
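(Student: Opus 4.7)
The plan is to verify both parts of the MOCS definition by computing the aggregate correlation
\[
R_{\mathcal{F}^p,\mathcal{F}^{p'}}(\tau)=\sum_{n=0}^{q^{v+d}-1}R_{\mathbf{f}^p_n,\mathbf{f}^{p'}_n}(\tau)
\]
and showing that it vanishes whenever $p=p'$ with $\tau\neq 0$ or $p\neq p'$ with $|\tau|<L$. Writing $\mathbf{f}^p_n=\mathbf{f}+\sum_{\alpha}n_\alpha\mathbf{x}_{\pi_\alpha(1)}+\sum_{k}n_{k+d}\mathbf{x}_{m-v+k}+c\sum_{\alpha}p_\alpha\mathbf{x}_{\pi_\alpha(m_\alpha)}$ and exchanging the order of summation, I would reduce the problem to
\[
R_{\mathcal{F}^p,\mathcal{F}^{p'}}(\tau)=\sum_{i=0}^{L-1-\tau}\xi^{\Delta_{p,p'}(i,\tau)}\prod_{\alpha=1}^{d}\Bigl(\sum_{n_\alpha=0}^{q-1}\xi^{n_\alpha\delta_\alpha(i,\tau)}\Bigr)\prod_{k=1}^{v}\Bigl(\sum_{n_{k+d}=0}^{q-1}\xi^{n_{k+d}\varepsilon_k(i,\tau)}\Bigr),
\]
where $\delta_\alpha$ and $\varepsilon_k$ record the $q$-ary differences between corresponding digits of $i$ and $i+\tau$, and $\Delta_{p,p'}$ collects $f(i)-f(i+\tau)$ together with the character term $c\sum_{\alpha}(p_\alpha i_{\pi_\alpha(m_\alpha)}-p'_\alpha(i+\tau)_{\pi_\alpha(m_\alpha)})$. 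Each geometric-progression factor equals $q$ when its exponent vanishes $\pmod q$ and $0$ otherwise, so only those $i$ whose digits at positions $\pi_\alpha(1)$ and $m-v+k$ match those of $i+\tau$ contribute.

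On the surviving set I would implement the Davis--Jedwab pairing adapted to the $q$-ary setting. For $p=p'$ and $\tau\neq 0$, I would select the largest partition block $\alpha_1$ on which $i$ and $i+\tau$ fail to agree and the smallest internal coordinate $\beta_1$ where the first disagreement occurs; then I would pair $i$ with the $q-1$ indices obtained by running the digit at position $\pi_{\alpha_1}(\beta_1-1)$ over $\mathbb{Z}_q$. Lemma~\ref{lem1}, applied to the quadratic monomial $a_{\alpha_1,\beta_1-1}x_{\pi_{\alpha_1}(\beta_1-1)}x_{\pi_{\alpha_1}(\beta_1)}$, combined with Lemma~\ref{lem3}, then forces the $q$ contributions in each orbit to sum to zero. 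For $p\neq p'$ with $\tau=0$ the alignment constraints are automatic, every $i$ contributes, the inner factors each yield $q$, and the residual sum $q^{v+d}\sum_i\xi^{c\sum_{\alpha=1}^{d'}(p_\alpha-p'_\alpha)i_{\pi_\alpha(m_\alpha)}}$ vanishes by applying Lemma~\ref{lem3} along the digit position $\pi_{\alpha^*}(m_{\alpha^*})$, where $\alpha^*$ is the smallest index $\leq d'$ with $p_{\alpha^*}\neq p'_{\alpha^*}$. The mixed case $p\neq p'$ with $\tau\neq 0$ combines the two arguments: the coordinate-flip pairing annihilates the $f$-contribution, while the character sum along $\pi_{\alpha^*}(m_{\alpha^*})$ handles the $p$-contribution.

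The main obstacle is ensuring that each paired index $i^{(t)}$, together with $i^{(t)}+\tau$, remains inside the admissible range $[0,L-1]$, which is delicate because $L=a_m q^{m-1}+\sum_{k=1}^{v-1}a_k q^{m-v+k-1}+q^{u}$ is not a pure power of $q$. This is precisely the role of Lemmas~\ref{5} and~\ref{lem2}: they certify that flipping any digit at position strictly below the threshold $h(u)$ preserves the truncation constraint, while the structural hypothesis $\{\pi_1(1),\ldots,\pi_{d'+1}(u')\}=\{1,\ldots,u\}$ places all coordinates $\pi_\alpha(m_\alpha)$ for $\alpha\leq d'$, as well as the pivot positions $\pi_{\alpha_1}(\beta_1-1)$ used for pairing, safely within this low-order window. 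A secondary difficulty is the bookkeeping: one must treat separately the subcases where the first disagreement lies inside some $\pi_\alpha$ block, at a high position $m-v+k$, or straddles the boundary $\pi_{d'+1}(u')$, and verify in each subcase that precisely one factor in the product is forced to zero by a surviving digit mismatch.
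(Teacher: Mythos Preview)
Your proposal follows essentially the same route as the paper: factor out the geometric sums over $n_\alpha$ and $n_{k+d}$, restrict to indices with matching digits at $\pi_\alpha(1)$ and $m-v+k$, then apply the $q$-ary Davis--Jedwab pairing at $\pi_{\alpha_1}(\beta_1-1)$, with Lemmas~\ref{5} and~\ref{lem2} controlling the truncation; the $\tau=0$ case is handled by a character sum along a block-end coordinate $\pi_{\alpha^*}(m_{\alpha^*})\le u$.

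Two points where the paper is sharper than your sketch. First, for the ``mixed case'' $p\neq p'$ with $\tau\neq 0$ you do \emph{not} need to combine two arguments: the single flip at $\pi_{\alpha_1}(\beta_1-1)$ already suffices, because $\beta_1-1<m_{\alpha_1}$ forces $\pi_{\alpha_1}(\beta_1-1)\neq \pi_\alpha(m_\alpha)$ for every $\alpha$, so the entire factor $\prod_{\alpha\le d'}\xi^{c(p_\alpha i_{\pi_\alpha(m_\alpha)}-p'_\alpha j_{\pi_\alpha(m_\alpha)})}$ is constant on each orbit and pulls out in front of the vanishing $f$-sum. Second, the range-preservation argument is not organized around ``where the first disagreement lies'' but around the \emph{high digits} of $i$: the paper splits into (i) $i_m=j_m=0$, where $i^{(t)},j^{(t)}<q^{m-1}\le L$ is automatic; (ii) $i_m=j_m=a_m$ but some $i_{m-v+k}=0$, where Lemma~\ref{5} applies with threshold $m-v+k_1-1\ge m-v$; and (iii) $i_m=a_m$ and every $i_{m-v+k}=a_k$, where Lemma~\ref{lem2} forces $i_s=j_s=0$ for $u<s\le m-v$, so $\pi_{\alpha_1}(\beta_1)\le u$ and then the structural hypothesis $\{\pi_1(1),\dots,\pi_{d'+1}(u')\}=\{1,\dots,u\}$ guarantees $\pi_{\alpha_1}(\beta_1-1)\le u$ as well. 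Only in subcase (iii) is the structural hypothesis invoked; your sketch suggests it is needed throughout, which would not be correct since in (i) and (ii) the pivot may lie anywhere in $\{1,\dots,m-v\}$.
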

\begin{proof}
\textcolor{black}{Since for sequences $\mathbf{f}^{p}_{n} $ and $\mathbf{f}^{p'}_{n}$, $R_{\mathbf{f}^{p'}_{n},\mathbf{f}^{p}_{n}}(-\tau)=R^{*}_{\mathbf{f}^{p}_{n},\mathbf{f}^{p'}_{n}}(\tau)$, then it suffice to prove that for $0\leq p, p' \leq q^{d'}-1$ and $0<\tau\leq L-1$,
	$$R_{\mathcal{F}^{p},\mathcal{F}^{p'}}(\tau)=\sum^{q^{v+d}-1}_{n=0}\sum^{L-1-\tau}_{i=0}\xi^{f^{p}_{n,i}-f^{p'}_{n,i+\tau}}
	=\sum^{L-1-\tau}_{i=0}
	\sum^{q^{v+d}-1}_{n=0}\xi^{f^{p}_{n,i}-f^{p'}_{n,i+\tau}}=0,$$ where $ f^{p}_{n,i} $ and $ f^{p'}_{n,j} $ are the $ (i+1) $-th and the $ (j+1) $-th
	element of sequence $ \mathbf{f}^{p}_{n} $ and $ \mathbf{f}^{p'}_{n} $, respectively. }	
	For simplicity, we assume $a_k\neq 0$ for any \textcolor{black}{$k\in \mathbb{N}_{v-1}$}.
	Throughout this paper, for a given integer $ i $, we set $ j=i+\tau $ and let $ (i_1,i_2,\cdots,i_m) $ and $ (j_1,j_2,\cdots,j_m) $ be the $ q $-ary representations of $ i $ and $ j $, respectively. Let $(p_{1},p_{2},\cdots,p_{d'})$ and $(p'_{1},p'_{2},\cdots,p'_{d'})$ are the $q$-ary representations of $p$ and $p'$, respectively.

	Case 1: If $i_{\pi_{\alpha}(1)}\neq j_{\pi_{\alpha}(1)}$ for some  $\alpha\in \mathbb{N}_{d}$ or $i_{m-v+k}\neq j_{m-v+k}$ for some   $k \in \mathbb{N}_{v}.$ Then
	\begin{align*}
		R_{\mathcal{F}^{p},\mathcal{F}^{p'}}(\tau)=\sum_{i=0}^{L-1-\tau}\xi^{f_i-f_j}\prod_{\alpha=1}^{d}\left( \sum_{n_{\alpha}=0}^{q-1}\xi^{n_{\alpha}\left( i_{\pi_{\alpha}(1)}- j_{\pi_{\alpha}(1)}\right)}\right)\prod_{\alpha=1}^{d'}\xi^{ p_{\alpha}i_{\pi_{\alpha}(m_{\alpha})}- p'_{\alpha}j_{\pi_{\alpha}(m_{\alpha})} }A=0.
	\end{align*}
	where $ A=\prod_{k=1}^{v}\left( \sum_{n_{d+k}=0}^{q-1}\xi^{n_{d+k}(i_{m-v+k}- j_{m-v+k})}\right) $.
	
	Case 2: If $i_{\pi_{\alpha}(1)}= j_{\pi_{\alpha}(1)}$ for all  $\alpha\in \mathbb{N}_{d}$, $i_{m-v+k}= j_{m-v+k}$ for all $k \in\mathbb{N}_v$, and $ i_m=j_m=0. $ 
	Then \begin{align*}
		R_{\mathcal{F}^{p},\mathcal{F}^{p'}}(\tau)=q^{d+v}\sum_{i=0}^{L-1-\tau}\xi^{f_i-f_j}\prod_{\alpha=1}^{d'}\xi^{ p_{\alpha}i_{\pi_{\alpha}(m_{\alpha})}- p'_{\alpha}j_{\pi_{\alpha}(m_{\alpha})} }.
	\end{align*}
Similar to Lemma \ref{lem1}, we assume that
	
	(1) $\alpha_{1} $ is the largest integer satisfying
	$ i_{\pi_{\alpha}(\beta)}= j_{\pi_{\alpha}(\beta)} $ for
	$ \alpha\in \mathbb{N}_{\alpha_1}$ and $ \beta\in \mathbb{N}_{m_{\alpha}}$.
 
 (2) $ \beta_1 $ is the smallest integer such that $ i_{\pi_{\alpha_{1}}(\beta_{1})}\neq j_{\pi_{\alpha_{1}}(\beta_{1})} $.

(3) Let $ i^{(t)} $ and $ j^{(t)} $ be integers which differ from $ i $ and $ j $, respectively, in only one position $ \pi_{\alpha_{1}}(\beta_{1}-1), $ that is, $ i^{(t)}_{\pi_{\alpha_{1}}(\beta_{1}-1)}=t\oplus i_{\pi_{\alpha_{1}}(\beta_{1}-1)} $ and $ j^{(t)}_{\pi_{\alpha_{1}}(\beta_{1}-1)}=t\oplus j_{\pi_{\alpha_{1}}(\beta_{1}-1)} $.
	
	Thus we get
	$$ f_{i^{(t)}}-f_i-f_{j^{(t)}}+f_j=ta_{\alpha_1,\beta_{1}-1}\left( i_{\pi_{\alpha_1}(\beta_{1})}- j_{\pi_{\alpha_1}(\beta_{1})}\right)  $$ and
	$$ \xi^{f_i-f_j}+\xi^{f_{i^{(1)}}-f_{j^{(1)}}}+\xi^{f_{i^{(2)}}-f_{j^{(2)}}}+\cdots+\xi^{f_{i^{(q-1)}}-f_{j^{(q-1)}}}=0,$$
	which implies   \begin{align*}
		R_{\mathcal{F}^{p_1},\mathcal{F}^{p_2}}(\tau)=q^{d+v}\sum_{i=0}^{L-1-\tau}\xi^{f_i-f_j}\prod_{\alpha=1}^{d'}\xi^{ p_{\alpha}i_{\pi_{\alpha}(m_{\alpha})}- p'_{\alpha}j_{\pi_{\alpha}(m_{\alpha})} }=0.
	\end{align*}
	
	Case 3: If $i_{\pi_{\alpha}(1)}= j_{\pi_{\alpha}(1)}$ for all  $\alpha \in \mathbb{N}_{d}$, $i_{m-v+k}= j_{m-v+k}$ for all $k \in \mathbb{N}_{v}$, and $ i_m=j_m=a_m\neq 0. $ Suppose $k_1$ is the largest integer such that $i_{m-v+k}=j_{m-v+k}=0$ for $k <v,$ i.e., $i_{m-v+k}=j_{m-v+k}=a_k\neq 0$ for $ k\in \{k_1+1,k_1+2,\cdots,v\}, $ then
	\begin{align*}
		i,j&<L=a_mq^{m-1}+\sum^{v-1}_{\alpha=1}a_{k}q^{m-v+k-1} +q^{u}\\
		&\leq a_mq^{m-1} +\sum^{v-1}_{k=k_1+1}a_{k}q^{m-v+k-1}+q^{m-v+k_1-1}-1.
	\end{align*}
	According to Lemma \ref{5}  and $ \pi_{\alpha_{1}(\beta_{1}-1)}<{m-v+k_1-1}, $ we have
	$$i^{(t)},j^{(t)}\leq a_mq^{m-1} +\sum^{v-1}_{k=k_1+1}a_{k}q^{m-v+k-1}+q^{m-v+k_1-1}-1< L.$$ Therefore, we get
	
	$$\xi^{f_{i}-f_{j}}+\xi^{f_{i^{(1)}}-f_{j^{(1)}}}+\cdots+\xi^{f_{i^{(q-1)}}-f_{j^{(q-1)}}}=0.$$

	Case 4: $i_{\pi_{\alpha}(1)}= j_{\pi_{\alpha}(1)}$ for all  $\alpha \in \mathbb{N}_d$, $i_{m-v+k}= j_{m-v+k}$ for all $k\in \mathbb{N}_{v}$, and $ i_m=j_m=a_m\neq 0. $ We also consider that $i_{m-v+k}=j_{m-v+k}=a_k\neq 0 $ for all $k\in \mathbb{N}_{v},$
	$$i,j<L=a_{m}q^{m-1}+\sum^{v-1}_{k=1}a_{k}q^{m-v+k-1}+q^{u}.$$ According to Lemma \ref{lem2}, we have $i_s=j_s=0$ for $s=u+1, u+2, \cdots, m-v-1$, so $\pi_{\alpha_{1}}(\beta_{1})\leq u, $ and $\pi_{\alpha_{1}}(\beta_{1}-1)\leq u. $
	  Therefore,
	$$i^{(t)},j^{(t)}\leq a_{m}q^{m-1}+\sum^{v-1}_{k=1}a_{k}q^{m-v+k-1}+q^{u}< L$$ and
	$$\xi^{f_{i}-f_{j}}+\xi^{f_{i^{(1)}}-f_{j^{(1)}}}+\cdots+\xi^{f_{i^{(q-1)}}-f_{j^{(q-1)}}}=0.$$
	
	Combining the above four cases, we can conclude that $R_{\mathcal{F}^{p},\mathcal{F}^{p'}}(\tau)=0$ for $0<\tau\leq L-1$.
	
	 Next, it remains to show that for $0\leq p\neq p' \leq q^{d'}-1$, $$R_{\mathcal{F}^{p},\mathcal{F}^{p'}}(0)=\sum^{q^{v+d}-1}_{n=0}\sum^{L-1}_{i=0}{\xi}^{f^{p}_{n,i}-f^{p'}_{n,i}}=0.$$
	Since $p\neq p'$, there exists a smallest $s\in \mathbb{N}_{d'}$ such that $p_{s}\neq p'_{s}$. \textcolor{black}{Then according to Lemma \ref{lem3},
		for any $0\leq i\leq L-1$, there exists $i^{(t)}$ whose $q$-ary representation differs from $i$ in only one position $s$, i.e., $(i_1,i_2,\cdots,i_{s-1},i_s\oplus t,i_{s+1},\cdots,i_m)$ for any $t\in \mathbb{Z}^{*}_{q}$. Therefore, we get
		\begin{align*}
			&\xi^{f^{p}_{n,i}-f^{p'}_{n,i}}+
			\xi^{f^{p}_{n,i^{(1)}}-f^{p'}_{n,i^{(1)}}}+ \cdots +
			\xi^{f^{p}_{n,i^{(q-1)}}-f^{p'}_{n,i^{(q-1)}}}\\
			=&\xi^{f^{p}_{n,i}-f^{p'}_{n,i}}\left( 1+\xi^{f^{p}_{n,i^{(1)}}-f^{p'}_{n,i^{(1)}}-f^{p}_{n,i}+f^{p'}_{n,i}}+\cdots +
			\xi^{f^{p}_{n,i^{(q-1)}}-f^{p'}_{n,i^{(q-1)}}-f^{p}_{n,i}+f^{p'}_{n,i}}\right)\\
			=& \xi^{f^{p}_{n,i}-f^{p'}_{n,i}}\left(1+\xi^{c(p_{s}-p'_{s})}+\cdots+\xi^{c(p_{s}-p'_{s})(q-1)} \right)\\
			=&0
		\end{align*}
		and
		$$R_{\mathcal{F}^{p},\mathcal{F}^{p'}}(0)=\sum^{q^{k+1}-1}_{n=0}\sum^{L-1}_{i=0}{\xi}^{f^{p}_{n,i}-f^{p'}_{n,i}}=0.$$
	}
	By the above discussion, we obtain that  $\{\mathcal{F}^{p}\mid p\in \{0,1,\cdots,q^{d'}-1\} \}$ is a $(q^{d'},q^{v+d},L)$-MOCS with $L=a_mq^{m-1}+\sum^{v-1}_{k=1}a_{k}q^{m-v+k-1}+q^{u}$, where $ a_{k}\in \mathbb{Z}_q$ and  $a_m\in \mathbb{Z}^*_q$.
	
\end{proof}
\begin{rem}
	In Theorem \ref{thm1}, if we let $q=2$ and all $a_{k}=0$ and $a_m=1$, then the length $L=a_mq^{m-1}+\sum^{v-1}_{k=1}a_{k}q^{m-v+k-1}+q^{u}$ turns into the form $2^{m-1}+2^{u},$ this result is coveblack in \cite{SW}.
\end{rem}

\textcolor{black}{\begin{ex}
			Let $m=5, $  $v=1, $  $d=2, $ $d'=1,$ $m_1=m_2=2, $  $(\pi_{1}(1),\pi_{1}(2),\pi_{2}(1),\pi_{2}(2))=(1,2,3,4)  $ and all $a_{\alpha,\beta}, b_{\alpha,\beta,k}, c_{s,l}, c_0,c$ are equal to 1. Then   $\{\mathcal{F}^{0},\mathcal{F}^{1},\mathcal{F}^{2}\}$ forms a  ternary $(3,27,108)$-MOCS from Theorem \ref{thm1}.	
\end{ex}}

\textcolor{black}{\section{Constructions of CCCs and optimal ZCCSs}}
In this section, we mainly propose an approach to constructing an optimal ZCCS. Before doing this work, we need to construct  CCCs as a preparing work.

\begin{thm}\label{8}
	Let $m$, $ d $ be positive integers with $ 2\leq d<m $, and $\{I_1, I_2,\cdots,I_d\}$ be a partition of the set $\mathbb{N}_m$. Put $ \pi_{\alpha} $ be a bijection from $ \mathbb{N}_{m_{\alpha}}$ to $ I_{\alpha} $, where $ \lvert I_{\alpha}\rvert =m_{\alpha} $ for any $ \alpha\in \{1,2,\cdots,d\}. $ Let
	\begin{align*}
		f(x)=&\sum_{\alpha=1}^{d}\sum_{\beta=1}^{m_{\alpha}-1}a_{\alpha,\beta}x_{\pi_{\alpha}(\beta)}x_{\pi_{\alpha}(\beta+1)}+\sum_{l=1}^{q-1}\sum_{u=1}^{m}h_{u,l}x^l_u+h_0,\\
		f^p_{n}(x)=&f(x)+\sum_{\alpha=1}^{d}n_{\alpha}x_{\pi_{\alpha}(1)}+\sum_{\alpha=1}^{d}p_{\alpha}x_{\pi_{\alpha}(m_{\alpha})},
	\end{align*}
	where $a_{\alpha,\beta} \in \mathbb{Z}^{*}_{q}$ is co-prime with $q$, $h_{u,l}$, $ h_0 \in \mathbb{Z}_{q}$,  $(n_{1},n_{2},\cdots,n_{d})$ and $(p_{1},p_{2},\cdots,p_{d})$ are the $q$-ary representations of $n$ and $ p $, respectively. Then the set \textcolor{black}{$ \{\mathcal{F}^{0},\mathcal{F}^{1},\cdots,\mathcal{F}^{q^d-1}\} $} forms a $q$-ary CCC   with
	\textcolor{black}{$\mathcal{F}^p=\{\mathbf{f}^p_{0},\mathbf{f}^p_{1},\cdots,\mathbf{f}^p_{q^d-1}\}$. }
\end{thm}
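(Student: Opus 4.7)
The plan is to follow the template used in the proof of Theorem \ref{thm1}, exploiting the fact that the sequences here have the full length $L = q^m$, so the boundary Cases 3 and 4 of that proof disappear entirely. A CCC is exactly a $(q^d, q^d, q^m)$-MOCS, so it suffices to show that
\[
R_{\mathcal{F}^p, \mathcal{F}^{p'}}(\tau) = \sum_{n=0}^{q^d - 1} \sum_{i=0}^{q^m - 1 - \tau} \xi^{f^p_{n,i} - f^{p'}_{n,i+\tau}} = 0
\]
whenever $0 \le \tau \le q^m - 1$ and either $\tau \ne 0$ or $p \ne p'$; negative shifts are covered by the conjugate symmetry $R_{\mathcal{F}^{p'}, \mathcal{F}^p}(-\tau) = R^*_{\mathcal{F}^p, \mathcal{F}^{p'}}(\tau)$. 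Throughout, set $j = i + \tau$ and write $(i_1, \ldots, i_m)$, $(j_1, \ldots, j_m)$ for their $q$-ary digits.

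For $\tau = 0$ with $p \ne p'$, we have $j = i$ and the integrand is independent of $n$, so the $n$-sum contributes a trivial factor $q^d$ and
\[
R_{\mathcal{F}^p, \mathcal{F}^{p'}}(0) = q^d \sum_{i=0}^{q^m - 1} \xi^{\sum_{\alpha = 1}^{d} (p_\alpha - p'_\alpha)\, i_{\pi_\alpha(m_\alpha)}}.
\]
Because $i$ ranges over the full cube $\mathbb{Z}_q^m$, this sum factorizes across coordinates; for the smallest $s$ with $p_s \ne p'_s$, the factor at coordinate $\pi_s(m_s)$ equals $\sum_{t=0}^{q - 1} \xi^{(p_s - p'_s) t} = 0$, killing the whole quantity.

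For $\tau > 0$, I reuse the two-case split of Theorem \ref{thm1}. If there exists $\alpha \in \mathbb{N}_d$ with $i_{\pi_\alpha(1)} \ne j_{\pi_\alpha(1)}$, then summing over the corresponding $n_\alpha$ yields $\sum_{n_\alpha = 0}^{q - 1} \xi^{n_\alpha(i_{\pi_\alpha(1)} - j_{\pi_\alpha(1)})} = 0$ and every term vanishes. Otherwise $i_{\pi_\alpha(1)} = j_{\pi_\alpha(1)}$ for all $\alpha$; since $i \ne j$, one can identify (as in Lemma \ref{lem1} and Case 2 of Theorem \ref{thm1}) the first block $\alpha_*$ and the first in-block position $\beta_1 \ge 2$ at which $i$ and $j$ differ, and then define $i^{(t)}$ and $j^{(t)}$ by flipping their common digit at position $\pi_{\alpha_*}(\beta_1 - 1)$ by $t \in \mathbb{Z}_q^*$. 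Crucially, because $L = q^m$ is the full cube, $i^{(t)}$ and $j^{(t)}$ automatically lie in $[0, q^m - 1]$, so the admissibility Lemmas \ref{5} and \ref{lem2} needed in Theorem \ref{thm1} are not required here. The quadratic chain term $a_{\alpha_*, \beta_1 - 1} x_{\pi_{\alpha_*}(\beta_1 - 1)} x_{\pi_{\alpha_*}(\beta_1)}$ of $f$ then supplies
\[
f_{i^{(t)}} - f_i - f_{j^{(t)}} + f_j \equiv t\, a_{\alpha_*, \beta_1 - 1}\bigl(i_{\pi_{\alpha_*}(\beta_1)} - j_{\pi_{\alpha_*}(\beta_1)}\bigr) \pmod q,
\]
and since $a_{\alpha_*, \beta_1 - 1}$ is coprime to $q$ with $i_{\pi_{\alpha_*}(\beta_1)} \not\equiv j_{\pi_{\alpha_*}(\beta_1)} \pmod q$, summing $\xi^{f_{i^{(t)}} - f_{j^{(t)}}}$ over $t = 0, 1, \ldots, q - 1$ gives $0$. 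Pairing indices in blocks of $q$ therefore zeros the entire correlation.

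The main obstacle I anticipate is routine bookkeeping: I must verify that flipping the digit at position $\pi_{\alpha_*}(\beta_1 - 1)$ does not disturb the remaining linear components of $f^p_n - f^{p'}_n$. The $p$-linear terms $\sum_\alpha p_\alpha x_{\pi_\alpha(m_\alpha)}$ are unaffected because $\beta_1 - 1 \le m_{\alpha_*} - 1 < m_{\alpha_*}$; the $n$-linear terms $\sum_\alpha n_\alpha x_{\pi_\alpha(1)}$ cancel between $i^{(t)}$ and $j^{(t)}$ (an issue only when $\beta_1 = 2$) because Case 2 provides $i_{\pi_{\alpha_*}(1)} = j_{\pi_{\alpha_*}(1)}$; and the single-variable terms $\sum_{l,u} h_{u,l} x_u^l$ cancel for the same reason, since $i$ and $j$ agree at every position up to $\pi_{\alpha_*}(\beta_1 - 1)$ by minimality of $\beta_1$. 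Once these consistency checks are in place, the argument proceeds exactly as in Theorem \ref{thm1} but is strictly shorter, because length-admissibility is automatic.
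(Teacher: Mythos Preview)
Your proposal is correct and matches the paper's approach: the paper also splits the index range according to whether all $i_{\pi_\alpha(1)}=j_{\pi_\alpha(1)}$, dispatches the first case via the geometric sum over $n_\alpha$, handles the second by the same digit-flip pairing at position $\pi_{\alpha_*}(\beta_1-1)$ that you describe, and treats $\tau=0$ by the balancedness of $\sum_\alpha(p_\alpha-p'_\alpha)\mathbf{x}_{\pi_\alpha(m_\alpha)}$ (equivalently your full-cube factorization). Your observation that Cases~3--4 of Theorem~\ref{thm1} disappear because $L=q^m$ is precisely the simplification the paper exploits, and your extra bookkeeping check that the flip leaves the $p$- and $n$-linear parts intact is implicit in the paper's argument.
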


\begin{proof}
The proof consists of two parts. In the first part, \textcolor{black}{we demonstrate that  for any $ 0 \leq p, p'\leq q^d-1 $ and $0<\tau\leq q^m-1$, $\mathcal{F}^{p}$  and $\mathcal{F}^{p'}$ satisfy the ideal correlation property, i.e.,
	$$R_{\mathcal{F}^{p},\mathcal{F}^{p'}}(\tau)=\sum_{n=0}^{q^{d}-1}R_{\mathbf{f}^{p}_n,\mathbf{f}^{p'}_n}(\tau)=\sum_{n=0}^{q^{d}-1}\sum_{i=0}^{q^m-1-\tau}\xi^{f^{p}_{n,i}-f^{p'}_{n,j}}=\sum_{i=0}^{q^m-1-\tau}\sum_{n=0}^{q^{d}-1}\xi^{f^{p}_{n,i}-f^{p'}_{n,j}}=0, $$ where $ f^{p}_{n,i} $ and $ f^{p'}_{n,j} $ are the $ (i+1) $-th and the $ (j+1) $-th
	element of sequence $ \mathbf{f}^{p}_{n} $ and $ \mathbf{f}^{p'}_{n} $, respectively.}  Similarly, let the definitions of $i, j, i^{(t)} $ and $ j^{(t)} $  be given as Theorem \ref{thm1}. 	Furthermore, we divide the set $ \{i \mid 0\leq i\leq  q^m-1-\tau \}$ into two parts: $ S_1(\tau)=\{i\mid \exists\ \alpha\in \{1,2,\cdots,d\},\ 0\leq i\leq q^m-1-\tau,\ i_{\pi_{\alpha}(1)}\neq j_{\pi_{\alpha}(1)}\} $  and  $ S_2(\tau)=\{i\mid \forall\ \alpha\in \{1,2,\cdots,d\},\ 0\leq i\leq q^m-1-\tau,\ i_{\pi_{\alpha}(1)}= j_{\pi_{\alpha}(1)}\} $.
 Thus we obtain that
\begin{align*}
	R_{\mathcal{\textcolor{black}{F}}^{p},\mathcal{\textcolor{black}{F}}^{p'}}(\tau)=&\sum_{i=0}^{q^m-1-\tau}\sum_{n=0}^{q^{d}-1}\xi^{f^{p}_{n,i}-f^{p'}_{n,j}}\\
	=&\sum_{i=0}^{q^m-1-\tau}\xi^{f_i-f_j}\prod_{\alpha=1}^{d}\left( \sum_{n_{\alpha}=0}^{q-1}\xi^{n_{\alpha}\left( i_{\pi_{\alpha}(1)}- j_{\pi_{\alpha}(1)}\right) }\right)\prod_{\alpha=1}^{d}\xi^{p_{\alpha}i_{\pi_{\alpha}(m_{\alpha})}- p'_{\alpha}j_{\pi_{\alpha}(m_{\alpha})}}\\
	=&\sum_{i\in S_1(\tau)}\xi^{f_i-f_j}\prod_{\alpha=1}^{d}\left( \sum_{n_{\alpha}=0}^{q-1}\xi^{n_{\alpha}\left( i_{\pi_{\alpha}(1)}- j_{\pi_{\alpha}(1)}\right)}\right)\prod_{\alpha=1}^{d}\xi^{p_{\alpha}i_{\pi_{\alpha}(m_{\alpha})}- p'_{\alpha}j_{\pi_{\alpha}(m_{\alpha})}}\\&+\sum_{i\in S_2(\tau)}\xi^{f_i-f_j}\prod_{\alpha=1}^{d}\left( \sum_{n_{\alpha}=0}^{q-1}\xi^{n_{\alpha}\left( i_{\pi_{\alpha}(1)}- j_{\pi_{\alpha}(1)}\right)}\right)\prod_{\alpha=1}^{d}\xi^{p_{\alpha}i_{\pi_{\alpha}(m_{\alpha})}- p'_{\alpha}j_{\pi_{\alpha}(m_{\alpha})}}\\
	=&q^d\sum_{i\in S_2(\tau)}\xi^{f_i-f_j}\prod_{\alpha=1}^{d}\xi^{p_{\alpha}i_{\pi_{\alpha}(m_{\alpha})}- p'_{\alpha}j_{\pi_{\alpha}(m_{\alpha})}},
\end{align*}
where  $ (p_{k,1},p_{k,2},\cdots,p_{k,d})  $ is the $ q $-ary representation of $ p_k $ for any $ k\in \{1,2\}. $
For any $ i\in S_2(\tau) $, according to the Case 2 of first part in Theorem \ref{thm1}, we have
$$ f_{i^{(t)}}-f_i-f_{j^{(t)}}+f_j=ta_{\alpha_1,\beta_{1}-1}\left( i_{\pi_{\alpha_1}(\beta_{1})}- j_{\pi_{\alpha_1}(\beta_{1})}\right)  $$ and
$$ (\xi^{f_i-f_j}+\xi^{f_{i^{(1)}}-f_{j^{(1)}}}+\xi^{f_{i^{(2)}}-f_{j^{(2)}}}+\cdots+\xi^{f_{i^{(q-1)}}-f_{j^{(q-1)}}})\prod_{\alpha=1}^{d}\xi^{p_{\alpha}i_{\pi_{\alpha}(m_{\alpha})}- p'_{\alpha}j_{\pi_{\alpha}(m_{\alpha})}}=0. $$

According to the above discussion, we know that the ideal correlation property is available for any $ \tau>0$.  Now, we need to prove that for any  $ 0 \leq p\neq p'\leq q^d-1 $ and $ \tau=0$,
	\begin{align*}
		R_{\mathcal{\textcolor{black}{F}}^{p},\mathcal{\textcolor{black}{F}}^{p'}}(0)
		=\sum_{n=0}^{q^{d}-1}R_{\mathbf{\textcolor{black}{f}}^{p}_{n},\mathbf{\textcolor{black}{f}}^{p'}_{n}}(0)
		=\sum_{n=0}^{q^{d}-1}\sum_{i=0}^{q^m-1}\xi^{\sum_{\alpha=1}^{d}(p_{\alpha}\oplus p'_{\alpha})i_{\pi_{\alpha}(m_\alpha)}}
		=0.
	\end{align*}
	Put $ \textbf{d}=\sum_{\alpha=1}^{d}(p_{\alpha}\oplus p'_{\alpha})\textbf{x}_{\pi_{\alpha}(m_{\alpha})} $. Due to  each $ \textbf{x}_{\pi_{\alpha}(m_{\alpha})} $ is a balanced sequence, the linear combination of  $\textbf{x}_{\pi_{1}(m_{1})},\textbf{x}_{\pi_{2}(m_{2})},\cdots,\textbf{x}_{\pi_{d}(m_{d})} $ is balanced, i.e., $ \textbf{d} $ is balanced. Then we have
	\begin{align*}
	R_{\mathcal{	\textcolor{black}{F}}^{p},\mathcal{\textcolor{black}{F}}^{p'}}(0)=
		\sum_{n=0}^{q^{d}-1}\sum_{i=0}^{q^m-1}\xi^{\sum_{\alpha=1}^{d}(p_{\alpha}\oplus p'_{\alpha})i_{\pi_{\alpha}(m_\alpha)}}
		=0,
	\end{align*}
	which completes the proof.

\end{proof}

With the help of the above Theorem \ref{8}, the following $(q^{v+d},q^{d},q^{m},q^{m-v})$-ZCCSs can be obtained easily.
\begin{thm} \label{4}
	Let $m$, $ d $, $ v $ be positive integers with $d\leq m-v $ and $ v<m $. Let $\{I_1, I_2,\cdots,I_d\}$  be a partition of the set $\mathbb{N}_{m-v}$. Put $ \pi_{\alpha} $ be a permutation from $\mathbb{N}_{m_{\alpha}}$ to $ I_{\alpha} $, where $ \lvert I_{\alpha}\rvert =m_{\alpha} $ for any $ \alpha\in \mathbb{N}_{d}. $ Also let 	
	\begin{align*}
		f(x)=&\sum_{\alpha=1}^{d}\sum_{\beta=1}^{m_{\alpha}-1}a_{\alpha,\beta}x_{\pi_{\alpha}(\beta)}x_{\pi_{\alpha}(\beta+1)}+\sum_{l=1}^{q-1}\sum_{u=1}^{m}h_{u,l}x^l_u+h_0,\\
		f^p_{n}(x)=&f(x)+\sum_{\alpha=1}^{d}n_{\alpha}x_{\pi_{\alpha}(1)}+b\left( \sum_{\alpha=1}^{d}p_{\alpha}x_{\pi_{\alpha}(m_{\alpha})}+\sum_{k=1}^{v}p_{k+d}x_{m-v+k}\right) ,
	\end{align*}
	where $(n_{1},n_{2},\cdots,n_{d})$ and $(p_{1},p_{2},\cdots,p_{v+d})$ are the $q$-ary representations of $n$ and $p$,  respectively,	$a_{\alpha,\beta}, b\in \mathbb{Z}^{*}_{q}$ are both co-prime with $q$, and $h_{u,l}, h_0 \in \mathbb{Z}_{q}$.    Then \textcolor{black}{$\left\lbrace \mathcal{F}^{0},\mathcal{F}^{1},\cdots,\mathcal{F}^{q^{v+d}-1}\right\rbrace $} forms a $(q^{v+d},q^{d},q^{m},q^{m-v})$-ZCCS with \textcolor{black}{$\mathcal{F}^{p}=\left\lbrace \mathbf{f}^{p}_{0},\mathbf{f}^{p}_{1},\cdots,\mathbf{f}^{p}_{q^d-1}\right\rbrace $.}
\end{thm}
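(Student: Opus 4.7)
The plan is to follow the template of Theorem \ref{8} closely, modifying it to accommodate (i) the extra high-position linear term $b\sum_{k=1}^{v} p_{k+d} x_{m-v+k}$ that distinguishes the ZCCS from the CCC, and (ii) the weaker ZCZ condition $|\tau| < q^{m-v}$ (rather than the ideal condition for all $\tau$). As in Theorem \ref{8}, I would write
\[
R_{\mathcal{F}^p, \mathcal{F}^{p'}}(\tau) = \sum_{i=0}^{q^m - 1 - \tau} \sum_{n=0}^{q^d - 1} \xi^{f_n^p(i) - f_n^{p'}(i+\tau)}
\]
and split the analysis into $\tau = 0$ and $0 < \tau < q^{m-v}$.

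For $\tau = 0$ with $p \neq p'$, the exponent collapses to $b\sum_\alpha (p_\alpha - p'_\alpha) i_{\pi_\alpha(m_\alpha)} + b\sum_k (p_{k+d} - p'_{k+d}) i_{m-v+k}$, which is independent of $n$. The inner sum over $n$ contributes a factor $q^d$ and the outer sum over $i$ factors by position; since $p \neq p'$ some coefficient is nonzero mod $q$ (using that $b$ is coprime to $q$), and summing over that single bit annihilates the product, analogous to Lemma \ref{lem3}.

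For $0 < \tau < q^{m-v}$ (any $p, p'$), summing over $n$ forces $i_{\pi_\alpha(1)} = j_{\pi_\alpha(1)}$ for every $\alpha \in \mathbb{N}_d$ (else the $n_\alpha$-sum vanishes). The crucial new observation is this: if further $i_{\pi_\alpha(\beta)} = j_{\pi_\alpha(\beta)}$ for every $\alpha \in \mathbb{N}_d, \beta \in \mathbb{N}_{m_\alpha}$, then $i$ and $j$ agree on every position in $\mathbb{N}_{m-v}$, so $\tau = j - i$ is a multiple of $q^{m-v}$; combined with $0 < \tau < q^{m-v}$ this is a contradiction. Hence some $\pi_{\alpha_1}(\beta_1)$ witnesses a disagreement, and necessarily $\beta_1 \geq 2$ (since $\beta_1 = 1$ was already forced to agree), which in particular forces $m_{\alpha_1} \geq 2$ so that the coefficient $a_{\alpha_1, \beta_1 - 1}$ is present.

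With this structural fact in hand, the pairing argument of Theorem \ref{8} transfers directly. Flipping position $s := \pi_{\alpha_1}(\beta_1 - 1) \in I_{\alpha_1} \subseteq \mathbb{N}_{m-v}$ defines $i^{(t)}, j^{(t)}$, and because $i_s = j_s$ the same offset is applied to both, giving $j^{(t)} = i^{(t)} + \tau$ and keeping $i^{(t)}$ within $[0, q^m - 1 - \tau]$. Since $s$ lies strictly inside block $I_{\alpha_1}$ (hence $s \neq \pi_\alpha(1)$ and $s \neq \pi_\alpha(m_\alpha)$ for any $\alpha$, using disjointness of the $I_\alpha$ and $1 \leq \beta_1 - 1 < m_{\alpha_1}$) and $s \leq m-v$, the $n$-linear term, the $p$-linear term $b\sum_\alpha p_\alpha x_{\pi_\alpha(m_\alpha)}$, the high-bit term $b\sum_k p_{k+d} x_{m-v+k}$, and the degree-$\leq (q-1)$ single-variable terms $h_{u,l}x_u^l$ all cancel in the double difference $f_n^p(i^{(t)}) - f_n^{p'}(j^{(t)}) - f_n^p(i) + f_n^{p'}(j)$. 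Only the adjacent quadratic term survives, producing $t \cdot a_{\alpha_1, \beta_1 - 1} \cdot (i_{\pi_{\alpha_1}(\beta_1)} - j_{\pi_{\alpha_1}(\beta_1)}) \pmod q$, which is nonzero mod $q$; summing over $t \in \mathbb{Z}_q$ annihilates each orbit of size $q$ and the total sum vanishes.

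The main obstacle is the bookkeeping that shows exactly which terms in $f_n^p$ are invariant under the flip at position $s$ — the added high-bit $p_{k+d}$-contributions are the genuinely new feature absent from Theorem \ref{8}, and handling them cleanly hinges on the structural fact that $s \in \mathbb{N}_{m-v}$ is disjoint from $\{m-v+1,\ldots,m\}$. Subsidiary edge cases ($m_{\alpha_1} = 1$ blocks, which can contribute disagreement only at $\pi_{\alpha}(1)$ and are already ruled out; the case $d = 1, m_1 = m-v$; verifying $\beta_1 \geq 2$) are all consequences of the partition structure and cause no real difficulty.
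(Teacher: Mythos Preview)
Your argument is correct and takes a genuinely different route from the paper. The paper proves Theorem \ref{4} \emph{by reduction to Theorem \ref{8}}: it observes that each length-$q^m$ sequence $\mathbf{f}^{p}_{n}$ is a concatenation of $q^{v}$ blocks, each a constant shift of a single length-$q^{m-v}$ sequence $\mathbf{g}^{p}_{n,0}$, and then expresses $R_{\mathcal{F}^{p},\mathcal{F}^{p'}}(\tau)$ for $0<\tau<q^{m-v}$ as a linear combination of $\sum_n R_{\mathbf{g}^{p}_{n,0},\mathbf{g}^{p'}_{n,0}}(\tau)$ and $\sum_n R^{*}_{\mathbf{g}^{p'}_{n,0},\mathbf{g}^{p}_{n,0}}(q^{m-v}-\tau)$, both of which vanish by the CCC property of Theorem \ref{8}. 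Your proof instead runs the digit-flip argument of Theorem \ref{8} directly on the full length-$q^{m}$ sequences, with the key new ingredient being the observation that if $0<\tau<q^{m-v}$ and all positions in $\mathbb{N}_{m-v}$ agree between $i$ and $j=i+\tau$, then $\tau$ is a nonzero multiple of $q^{m-v}$, a contradiction; this guarantees a disagreement index $\pi_{\alpha_1}(\beta_1)$ with $\beta_1\geq 2$, after which the orbit cancellation is identical to Theorem \ref{8}. The paper's approach is more modular (a clean black-box reduction), while yours is more self-contained and makes transparent exactly why the ZCZ width comes out to $q^{m-v}$.

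One small inaccuracy: your parenthetical claim that $s=\pi_{\alpha_1}(\beta_1-1)$ satisfies $s\neq\pi_{\alpha}(1)$ for every $\alpha$ is false when $\beta_1=2$ (then $s=\pi_{\alpha_1}(1)$). This does not damage the proof, however: the $n$-linear term $\sum_\alpha n_\alpha x_{\pi_\alpha(1)}$ still cancels in the double difference because $i_s=j_s$ implies $i^{(t)}_s - i_s = j^{(t)}_s - j_s$, so the contributions from $f^{p}_{n}$ and $f^{p'}_{n}$ match; equivalently, the factor $q^{d}$ coming from the $n$-sum is unchanged across the orbit since $(i^{(t)},j^{(t)})$ still has $i^{(t)}_{\pi_\alpha(1)}=j^{(t)}_{\pi_\alpha(1)}$ for all $\alpha$. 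You should simply drop the claim $s\neq\pi_\alpha(1)$ and rely on $i_s=j_s$ instead.
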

\begin{proof}
	It is obvious that every sequence \textcolor{black}{$\mathbf{f}^{p}_{n}$} can be divided into $q^{v}$ relevant sub-sequence by a concatenate method, i.e.,
	\textcolor{black}{$$\mathbf{f}^{p}_{n}=\mathbf{g}^{p}_{n,0}|\mathbf{g}^{p}_{n,1}|\cdots|\mathbf{g}^{p}_{n,q^{v}-1},$$ }
	\textcolor{black}{Each $\mathbf{g}^{p}_{n,e}$ can be expressed as $\mathbf{g}^{p}_{n,0}\oplus x_e$, i.e., $\mathbf{g}^{p}_{n,e}=\mathbf{g}^{p}_{n,0}\oplus x_e,$ where $ \mathbf{g}^{p}_{n,e} $ denotes the $ (e+1) $-th sub-sequence of $ \mathbf{f}^{p}_{n} $, $x_e\in \mathbb{Z}_{q}$  and $e\in\{0,1,2,\cdots,q^{v}-1\}.$}  For
	any $0<\tau\leq q^{m-v}-1$ and any $0\leq p\leq q^{v+d}-1$,
	\begin{align*}
		R_{\mathcal{F}^p}(\tau) &=\sum^{q^{d}-1}_{n=0}R_{\mathbf{f}^p_{n}}(\tau)\\
		&=\left(1+\sum^{q^{v}-1}_{k=1}\xi^{u_{k}-w_{k}}\right)\sum^{q^{d}-1}_{n=0}R_{\mathbf{g}^p_{n,0}}(\tau)+
		\left(\xi^{-w_{1}}+\sum^{q^{v}-2}_{k=1}\xi^{u_{k}-w_{k+1}}\right)\sum^{q^{d}-1}_{n=0}R^{*}_{\mathbf{g}^p_{n,0}}(q^{v}-\tau)\\
		&=0.
	\end{align*}
	By the way of Theorem \ref{8}, we conclude that the sequence set \textcolor{black}{$\left\lbrace \mathbf{g}^{p}_{0,0},\mathbf{g}^{p}_{1,0},\cdots,\mathbf{g}^{p}_{q^{d}-1,0}\right\rbrace $} forms a GCS. Therefore, we know that  \textcolor{black}{$\left\lbrace \mathbf{f}^{p}_{0},\mathbf{f}^{p}_{1},\cdots,\mathbf{f}^{p}_{q^d-1}\right\rbrace $} satisfies the auto-correlation property for $0<\tau\leq q^{m-v}-1$.
	
	Next, we verify the cross-correlation property, i.e., for $0\leq p\neq p'\leq q^{v+d}-1$ and for any $ 0<\tau<q^{m-v} $,
	\begin{align*}
		&R_{\mathcal{F}^{p},\mathcal{F}^{p'}}(\tau)\\ &=\sum^{q^{d}-1}_{n=0}R_{\mathbf{f}^{p}_{n},\mathbf{f}^{p'}_{n}}(\tau)\\
		&=\left(1+\sum^{q^{v}-1}_{k=1}\xi^{u_{k}-w_{k}}\right)\sum^{q^{d}-1}_{n=0}R_{\mathbf{g}^{p}_{n,0},\mathbf{g}^{p'}_{n,0}}(\tau)+
		\left(\xi^{-w_{1}}+\sum^{q^{v}-2}_{k=1}\xi^{u_{k}-w_{k+1}}\right)\sum^{q^{d}-1}_{n=0}R^{*}_{\mathbf{g}^{p'}_{n,0},\mathbf{g}^{p}_{n,0}}(q^{v}-\tau)\\
		&=0,
	\end{align*}
	where \textcolor{black}{$\mathbf{f}^{p}_{n}=\mathbf{g}^{p}_{n,0}|(\mathbf{g}^{p}_{n,0}\oplus u_{1})|\cdots|(\mathbf{g}^{p}_{n,0}\oplus u_{q^{v}-1})$} and \textcolor{black}{$\mathbf{f}^{p'}_{n}=\mathbf{g}^{p'}_{n,0}|(\mathbf{g}^{p'}_{n,0}\oplus w_{1})|\cdots|(\mathbf{g}^{p'}_{n,0}\oplus w_{q^{v}-1})$} with $u_{i}, w_{i}\in \mathbb{Z}_{q}.$ The $q$-ary representations of $p$ and $p'$ are $(p_{1},p_{2},\cdots,p_{v+d})$ and $(p'_{1},p'_{2},\cdots,p'_{v+d})$, respectively.
	
	According to the definition of $ f^{p}_{n}(x) $, we get that
	$$ g^{p}_{n,0}(x)=h(x)+\sum_{\alpha=1}^{d}n_{\alpha}x_{\pi_{\alpha}(1)}+b\sum_{\alpha=1}^{d}n_{\alpha}x_{\pi_{\alpha}(m_{\alpha})}, $$ where $ h(x)=\sum_{\alpha=1}^{d}\sum_{\beta=1}^{m_{\alpha}-1}a_{\alpha,\beta}x_{\pi_{\alpha}(\beta)}x_{\pi_{\alpha}(\beta+1)}+\sum_{l=1}^{q-1}\sum_{u=1}^{m-v}h_{u,l}x^l_u+h_0 $ with $\{I_1, I_2,\cdots,I_d\}$ a partition of the set $\mathbb{N}_{m-v}$. Obviously, according to Theorem \ref{8}, we get that $$\sum^{q^{d}-1}_{n=0}R_{\mathbf{g}^{p}_{n,0},\mathbf{g}^{p'}_{n,0}}(\tau)=0  $$ and $$ \sum^{q^{d}-1}_{n=0}R^{*}_{\mathbf{g}^{p'}_{n,0},\mathbf{g}^{p}_{n,0}}(q^{v}-\tau)=0. $$ This shows that $ R_{\mathcal{F}^{p},\mathcal{F}^{p'}}(\tau)=0. $ Similarly, we can prove that $ R_{\mathcal{F}^{p},\mathcal{F}^{p'}}(\tau)=0 $ for any $ -q^d+1\leq \tau<0. $
	
	When $ \tau=0, $ for any $0\leq p\neq p'\leq q^{v+d}-1$,
	\begin{align*}
		R_{\mathcal{F}^{p},\mathcal{F}^{p'}}(0)=\sum^{q^{d}-1}_{n=0}\sum_{i=0}^{q^m-1}\prod_{\alpha=1}^{d}\xi^{b(p_{\alpha}\oplus p'_{\alpha}) i_{\pi_{\alpha}(m_{\alpha})}}\prod_{k=1}^{v}\xi^{b(p_{k+d}\oplus p'_{k+d})i_{m-v+k}}=0.
	\end{align*}
	The equality holds because $ p\neq p'  $ leads to the existence of at least one index $ s\in \mathbb{N}_{v+d} $ such that $ p_{s}\neq p'_{s} $ and $ gcd(b,q)=1. $
	By the above two cases, we get that $ R_{\mathbf{f}^{p},\mathbf{f}^{p'}}(\tau)=0 $ for any $ -q^d<\tau<q^d $ and $ 0\leq p\neq p'\leq q^{v+d}. $ Thus we prove that $ \left\lbrace \mathcal{F}^{0},\mathcal{F}^{1},\cdots,\mathcal{F}^{q^{v+d}-1}\right\rbrace  $ is a $ (q^{v+d},q^{d},q^{m},q^{m-v}) $-ZCCS with $ \mathcal{F}^p=\left\lbrace \mathbf{f}_{0}^{p},\mathbf{f}_{1}^{p},\cdots,\mathbf{f}_{q^d-1}^{p}\right\rbrace  $.
	
\end{proof}

\begin{rem}
	According to Lemma \ref{15}, we know the ZCCS constructed from Theorem \ref{4} is optimal since $M/N=q^{v+d}/q^d=L/Z$ is available. In particular, when $v=0$, the Theorem \ref{4} changes into Theorem \ref{8}.
\end{rem}
\begin{ex}\label{7}
	Let $a_{1,1}=b=1$,  $q=4 $,  $m=3 $,  $v=1 $,  $d=1 $,  $m_1=2 $,  $(\pi_{1}(1),\pi_{1}(2))=(2,1) $,  $ h_0=1$, $(h_{1,1}, h_{2,1},h_{3,1})=(1,2,2)$,  $(h_{1,2}, h_{2,2},h_{3,2})=(3,1,0)$ and $(h_{1,3}, h_{2,3},h_{3,3})=(2,1,3)$ in Theorem \ref{4}. Then $\left\lbrace \mathcal{F}^{0},\mathcal{F}^{1},\cdots,\mathcal{F}^{15}\right\rbrace $ forms a quaternary $(16,4,64,16)$-ZCCS, where  $ \mathcal{F}^{3} $  and $ \mathcal{F}^{10} $ are given by
	$$\begin{bmatrix}
		\mathbf{f}^{3}_0\\\mathbf{f}^3_1\\\mathbf{f}^3_2\\\mathbf{f}^3_3
	\end{bmatrix}=
	\begin{bmatrix}
		1     2     1     2     1     3     3     1     3     2     3     2     3     3     1     1     1     2     1     2     1     3     3     1     3     2     3     2     3     3     1     1     1     2     1     2     1     3     3     1     3     2     3     2     3     3     1     1     1     2     1     2     1     3     3     1     3     2     3     2     3     3     1     1
		\\
		1     2     1     2     2     0     0     2     1     0     1     0     2     2     0     0     1     2     1     2     2     0     0     2     1  0     1     0     2     2     0     0     1     2     1     2     2     0     0     2     1     0     1     0     2     2     0     0     1     2 1     2     2     0     0     2     1     0     1     0     2     2     0     0
		\\
		1     2     1     2     3     1     1     3     3     2     3     2     1     1     3     3     1     2     1     2     3     1     1     3     3     2     3     2     1     1     3     3     1     2     1     2     3     1     1     3     3     2     3     2     1     1     3     3     1     21     2     3     1     1     3     3     2     3     2     1     1     3     3
		\\
		1     2     1     2     0     2     2     0     1     0     1     0     0     0     2     2     1     2     1     2     0     2     2     0     1     0     1     0     0     0     2     2     1     2     1     2     0     2     2     0     1     0     1     0     0     0     2     2     1     21     2     0     2     2     0     1     0     1     0     0     0     2     2
		
	\end{bmatrix}$$
	$$\begin{bmatrix}
		\mathbf{f}^{10}_0\\\mathbf{f}^{10}_1\\\mathbf{f}^{10}_2\\\mathbf{f}^{10}_3
	\end{bmatrix}=
	\begin{bmatrix}
		1     1     3     3     1     2     1     2     3     1     1     3     3     2     3     2     3     3     1     1     3     0     3     0     1     3     3     1     1     0     1     0     1     1     3     3     1     2     1     2     3     1     1     3     3     2     3     2     3     3 1     1     3     0     3     0     1     3     3     1     1     0     1     0
		\\
		1     1     3     3     2     3     2     3     1     3     3     1     2     1     2     1     3     3     1     1     0     1     0     1     3     1     1     3     0     3     0     3     1     1     3     3     2     3     2     3     1     3     3     1     2     1     2     1     3     3 1     1     0     1     0     1     3     1     1     3     0     3     0     3
		\\
		1     1     3     3     3     0     3     0     3     1     1     3     1     0     1     0     3     3     1     1     1     2     1     2     1     3     3     1     3     2     3     2     1     1     3     3     3     0     3     0     3     1     1     3     1     0     1     0     3     3 1     1     1     2     1     2     1     3     3     1     3     2     3     2
		\\
		1     1     3     3     0     1     0     1     1     3     3     1     0     3     0     3     3     3     1     1     2     3     2     3     3     1     1     3     2     1     2     1     1     1     3     3     0     1     0     1     1     3     3     1     0     3     0     3     3     3  1     1     2     3     2     3     3     1     1     3     2     1     2     1
		
	\end{bmatrix}$$
	The sum of aperiodic auto-correlation of sequences $ \mathcal{F}^3 $ is presented in Figure  \ref{Figure1} and the sum of aperiodic cross-correlation of sequences $ \mathcal{F}^3  $ and $ \mathcal{F}^{10} $ is presented in Figure \ref{Figure2}.
	
\end{ex}

\begin{figure}[htbp]
	\begin{minipage}[t]{0.48\textwidth}
		\centering
		\includegraphics[width=1.1\textwidth,height=1.1\textwidth]{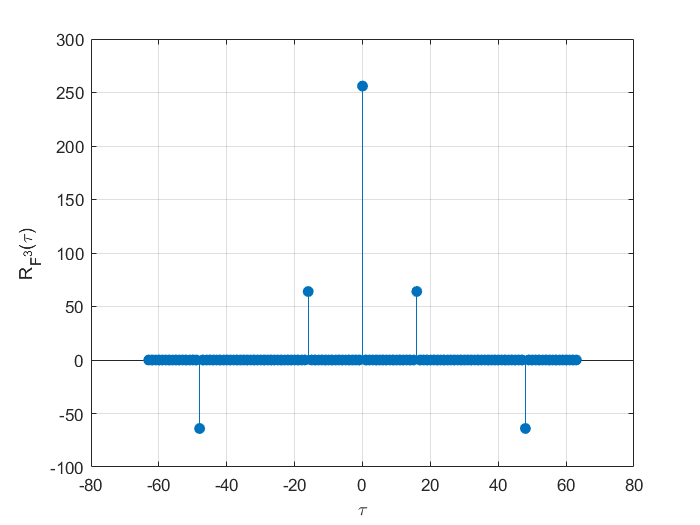}
		\caption{Auto-correlation of $ \mathcal{F}^3 $}
		\label{Figure1}
	\end{minipage}
	\begin{minipage}[t]{0.48\textwidth}
		\centering
		\includegraphics[width=1.1\textwidth,height=1.1\textwidth]{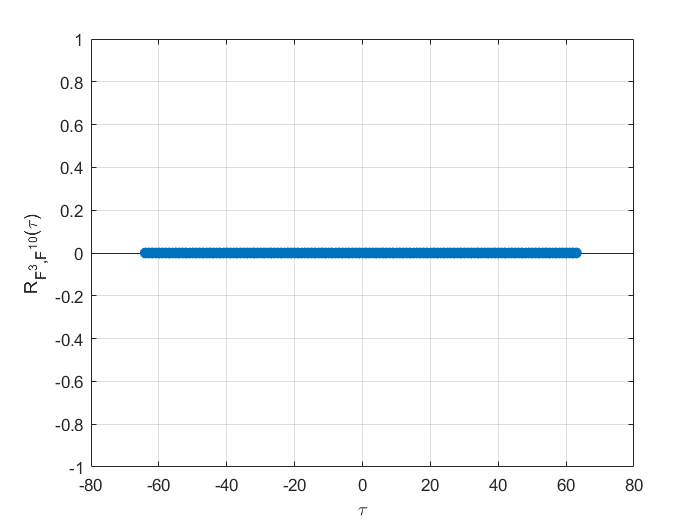}
		\caption{Cross-correlation of  $ F^3  $ and $ \mathcal{F}^{10} $}
		\label{Figure2}
	\end{minipage}
\end{figure}
\section{Comparison}
\textcolor{black}{\begin{table}
		\centering
		\setlength{\abovecaptionskip}{0cm}
		\setlength{\belowcaptionskip}{0.2cm}
		\caption{Summary of Existing MOCSs}
		\tiny
		\begin{tabular}{|m{1cm}<{\centering}|m{2cm}<{\centering}|m{5.8cm}<{\centering}|m{5.2cm}<{\centering}|}
			\hline Source&Based on&Parameters&Conditions\\
			\hline \cite{AR}&GBF&$(2^k,2^k,2^m)$&$0<k\leq m$\\
			\hline \cite{SW}&GBF&$(2^{k'},2^{k+1},2^m+2^t)$&$0<k,t\leq m;0\leq {k'}\leq t; {k'}\leq {k-1}$\\
			\hline \cite{SW}&GBF&$(2^k,2^{k+1},2^m+2^t)$&$0<k\leq t\leq m$\\
			\hline \cite{LT}&GBF&$(2^k,2^{k+1},2^m+2^t)$&$0\leq t<k\leq m$\\
			\hline\cite{KM}&GBF&$(2^{k+1},2^{k+1},2^{m-1}+2^{m-3})$&$k\leq m-5$\\
		\hline\cite{SP5}&$q$-ary function&$(p^{n+1},p^{n+1},p^{m})$&$p$  is a prime number, $0<n<m$\\
			\hline \cite{PS2}&MVF&$(\prod^{k}_{i=1}p_{i}^{n_i},\prod^{k}_{i=1}p_{i}^{n_i},\prod^{k}_{i=1}p_{i}^{m_{i}})$&$p_{i}|q,$ $q$ is a finite positive integer, $i\in \mathbb{N}_k$,$0<n_i\leq m_i$\\
			\hline \cite{SD1}&PU matrix&$(M,M,M^m)$&$m>0$, $M$ is the order of  PU matrix\\
			\hline \cite{SD2}&PU matrix&$(M,M,N^m)$&$N|M,m>0$, $M$ is the order of  PU matrix\\
			\hline \cite{YJ}&Kronecker product&$(M_{1}M_{2},M_{1}M_{2},N_{1}N_{2})$&$(M_1,M_1,N_1)$-CCC  and $ (M_2,M_2,N_2)$-CCC, $M1, M2,N_1,N_2$ are four even numbers\\
			\hline \cite{ZG}&Kronecker product&$(M,M,MN_1N_2)$&$(M,M,N_1)$-CCC  and $ (M,M,N_2)$-CCC, $2\leq M,N_1,N_2$\\
		\hline \cite{KL}&Extended correlation&$(MP,MP,2N-1)$&$(M,M,N)$-CCC and $  (P, P, N)$-
			CCC,  $2\leq M,P,N$\\
\hline \cite{SB}&concatenation&$(M_1M_2/2,M_1M_2/2, 2L_1 L_2)$&$M_1, M_2$ are two even numbers, $ L_1,L_2$ are two positive integers\\
			\hline Theorem \ref{thm1}&EBF&$(q^{d'},q^{v+d},a_mq^{m-1}+\sum^{v-1}_{k=1}a_{k}q^{m-v+k-1}+q^{u} )$&$0<d'<d< m$, $ a_{k}\in \mathbb{Z}_q$, $a_m\in \mathbb{Z}^*_q$, and $q\geq 2$ is a  positive integer\\
			\hline
		\end{tabular}
\end{table}}

\begin{table}
	\centering
	\setlength{\abovecaptionskip}{0cm}
	\setlength{\belowcaptionskip}{0.2cm}
	\caption{Summary of Existing ZCCSs}
	\tiny
	\begin{tabular}{|m{0.8cm}<{\centering}|m{1.5cm}<{\centering}|m{6cm}<{\centering}|m{3cm}<{\centering}|m{0.8cm}<{\centering}|m{1cm}<{\centering}|}
		\hline Source&Based on&Parameters&Conditions&Optimal&Remark\\
		\hline \cite{XC}&GBF&$(2^{k+1},2^{k+1},3\cdot2^{m},2^{m+1})$&$0<k\leq m$&$\times$& Direct\\
		\hline \cite{XC}&GBF&$(2^{k+2},2^{k+2},2^{m}\cdot L,2^{m}\cdot L')$&$L'>\frac{L}{2}$&$\surd$& Direct\\
		\hline \cite{XC}&GBF&$(2^{k+1},2^{k+1},3\times 2^{m},2^{m+1})$&$m>0,k>0$&$\surd$& Direct\\
		\hline \cite{WS}&GBF&$(2^{k+v},2^{k},2^{m},2^{m-v})$&$v\leq m,k\leq m-v$&$\surd$& Direct\\
		\hline
		\cite{SP3}&GBF&$(2^{n},2^{n},2^{m-1}+2,2^{m-2}+2^{\pi(m-3)}+1)$&$\pi$ is a permutation of $\mathbb{N}_{m-2}$, $m\geq 3$&$\surd$& Direct\\
		\hline \cite{SP3}&GBF&$(2^{n+1},2^{n+1},2^{m-1}+2,2^{m-2}+2^{\pi(m-3)+1})$&$\pi$ is a permutation of $\mathbb{N}_{m-2}$, $v\leq m$, $q\geq 2$, $m\geq 2$&$\surd$& Direct\\
		\hline \cite{SP2}&GBF&$(2^{n+p},2^{n},2^{m},2^{m-p})$&$p\leq m$&$\surd$& Direct\\
		\hline \cite{SP1}&GBF&$(2^{k+p+1},2^{k+1},2^{m},2^{m-p})$&$k+p\leq m$&$\surd$& Direct\\

		\hline \cite{GG1}&GBF&$(2^{k+1},2^{k+1},3(2^{m-1}+2^{m-3}),2(2^{m-1}+2^{m-3}))$&$m\geq 5$, $k>0$&$\surd$& Direct\\
		\hline \cite{GG1}&GBF&$(R2^{k+l},2^{k+1},R(2^{m-1}+2^{m-3}),2^{m-1}+2^{m-3})$&$m\geq 5$, $k>0$, and $R$ is even&$\surd$& Direct\\
		 \hline \cite{DS}&BH Matrix&$(MP,M,MP,M)$&$M,P$ are the order of  BH matrix &$\surd$& Indirect\\
		\hline \cite{DS}&Optimal ZPU Matrix& $(MP,M,M^{N+1}P,M^{N+1})$&$M,P$ are the order of  BH matrix, $N>0$&$\surd$& Indirect\\
		\hline \cite{AA}&Hadamard product &$(2^{n+1},2^{n+1},N,Z)$&$N\geq 3,$ $N$ is odd, $\lfloor \frac{N}{Z}=1\rfloor$&$\surd$& Direct\\
		\hline \cite{AA}&ZCP&$(2^{m},2^{m},L,Z)$&$Z\geq \lceil \frac{L}{2}\rceil$&$\surd$& Direct\\
	
		\hline \cite{GG}&PBF&$(\prod^{l}_{i=1}p_{i}2^{n+1},2^{n+1},2^{m}\prod^{l}_{i=1}p_{i},2^{m})$&$\forall p_{i}$ is a prime, $n,m>0$ &$\times$& Direct\\
		\hline \cite{SP4}&PBF&$(p2^{k+1},2^{k+1},p2^{m},2^{m})$&$p$ is a prime&$\surd$& Direct\\

		\hline \cite{RA}&MVF& $(\prod^{k}_{i=1}p_{i}^2,\prod^{k}_{i=1}p_{i},\prod^{k}_{i=1}p_{i}^{m_{i}},\prod^{k}_{i=1}p_{i}^{m_{i}-1})$&$p_i$ is a prime number, $m_i>0$&$\surd$& Direct\\
		\hline \cite{SB}&EBF&$(q^{v+1},q,q^{m},q^{m-v})$&$q\geq 2$, $v\leq m$&$\surd$& Direct\\
		\hline Theorem \ref{4}&EBF&$(q^{v+d},q^d,q^{m},q^{m-v})$&$v< m$, $d\leq m-v$, $q\geq 2$ is a  positive integer&$\surd$& Direct\\
		\hline
	\end{tabular}
\end{table}
Table 1 and Table 2 show the existence of constructions of MOCSs and ZCCSs in previous papers. The notation ``$\surd$" (resp. ``$\times$") in Table 2 means the corresponding ZCCSs are optimal (resp. non-optimal).

\textcolor{black}{
From Table 1, we know that all GBFs based MOCSs  have lengths of $2^m$ or $2^{m}+2^{t}$ \cite{ AR,  SW, LT,KM}. The constructions in \cite{SP5} and \cite{PS2} generate MOCSs with flexible lengths by using $q$-ary functions and MVFs, respectively. But  both of these methods only have power of two lengths when $q=2.$ Other methods for designing  MOCSs include  PU matrices \cite{SD1,SD2}, interleaving, Kronecker product \cite{ZG,YJ}, extended correlation \cite{KL} and concatenation \cite{SB}. However, These  methods are hard to be applied in engineering due to their large space and time requirements in hardware generation. Compablack with the previous constructions, our results have flexible lengths and  non-power-of-two lengths when $q=2$.
}

\textcolor{black}{
From Table 2, we know the constructions of ZCCSs in the literature mainly divided into direct and indirect approaches. The direct methods are mainly based on GBFs \cite{SP3,SP2,SP1,WS,XC, GG1}, Pseudo-Boolean functions (PBFs) \cite{GG,SP4}, EBFs \cite{SB} and MVFs \cite{RA}. In fact, all existing ZCCSs constructed based on GBFs and PBFs have multiples of two lengths and the ZCCSs based on MVFs have limited set sizes.  For other indirect methods, some researchers provided ZCCSs by Hadamard product \cite{AA}, Z-complementary pairs (ZCPs) \cite{AA},  BH matrix  and optimal Z-paraunitary (ZPU) matrices \cite{DS}. These constructs are difficult to implement on hardware.  In the case of the same length and ZCZ width, compablack to \cite{SB}, the proposed  ZCCSs have larger set sizes or lengths. Moreover, our ZCCSs can accommodate more users on the basis of achieving the optimality.}

\section{Conclusion}
\textcolor{black}{In this paper, we mainly present a construction of optimal ZCCSs and  a construction of MOCSs with flexible lengths based on EBFs.  According to the arbitrariness of $q$, the proposed MOCSs cover the result in \cite{PS2} and have non-power-of-two lengths when $q=2$.
Moreover, the resulting MOCSs can be obtained directly from EBFs without using  tedious sequence operations. The proposed MOCSs with flexible lengths  find many applications in wireless communication due to its good correlation properties.
The proposed ZCCSs are optimal with respect to the theoretical upper bound and  we can obtain a new class of ZCCSs of arbitrary lengths with large zero correlation zone width. }

\section*{Declarations}
\textbf{Funding}  This research was supported by the National Natural Science Foundation of China (Grant No.
12171241)\\
\textbf{Conflicts of Interest} The authors declare that they have no conflicts of interest.\\
\textbf{Ethics approval and consent to participate} Not applicable.\\
\textbf{Consent for publication} Not applicable.

\end{document}